\newcounter{abci}\renewcommand{\theabci}{\alph{abci}}
\theoremstyle{plain}
\newtheorem{lem}{Lemma}
\newtheorem{thm}{Theorem}
\newtheorem{prop}{Proposition}
\theoremstyle{definition}
\newcommand{\Tr}{{\rm Tr}}
\author{
    \IEEEauthorblockN{Evgenia Chunikhina\IEEEauthorrefmark{2}, Paul Logan\IEEEauthorrefmark{3}, Yevgeniy Kovchegov\IEEEauthorrefmark{4}, Anatoly Yambartsev\IEEEauthorrefmark{1}}, Debashis Mondal\IEEEauthorrefmark{3}\IEEEauthorrefmark{5}, Andrey Morgun\IEEEauthorrefmark{6}\\ 
    \IEEEauthorblockA{\IEEEauthorrefmark{2}Data Science, Pacific University, Forest Grove, OR, USA  \quad Email: chunikhe.vazhno@gmail.com}\\
    \IEEEauthorblockA{\IEEEauthorrefmark{3}Statistics Department, Oregon State University, Corvallis, OR, USA}\\
    \IEEEauthorblockA{\IEEEauthorrefmark{4}Department of Mathematics, Oregon State University, Corvallis, OR, USA}\\
    \IEEEauthorblockA{\IEEEauthorrefmark{1}IME, University of S\~{a}o Paulo, S\~{a}o Paulo, Brazil}\\
    \IEEEauthorblockA{\IEEEauthorrefmark{5}Dept. of Mathematics and Statistics, Washington University in St. Louis, St. Louis, MO, USA}\\
    \IEEEauthorblockA{\IEEEauthorrefmark{6}College of Pharmacy, Oregon State University, Corvallis, OR, USA}\\
}
\title{The C-SHIFT algorithm for normalizing covariances}
\date{}
\begin{document}
\maketitle



\begin{abstract}
Omics technologies are powerful tools for analyzing patterns in gene expression data for thousands of genes. Due to a number of systematic variations in experiments, 
the raw gene expression data is often obfuscated by undesirable technical noises. Various normalization techniques were designed in an attempt to remove these 
non-biological errors prior to any statistical analysis. 
One of the reasons for normalizing data is the need for recovering the covariance matrix used in gene network analysis.
In this paper, we introduce a novel normalization technique, called the covariance shift (C-SHIFT) method.
This normalization algorithm uses optimization techniques together with the blessing of dimensionality philosophy and energy minimization hypothesis
for covariance matrix recovery under additive noise (in biology, known as the bias). Thus, it is perfectly suited for the analysis of logarithmic gene expression data.
Numerical experiments on synthetic data demonstrate the method's advantage over the classical normalization techniques.
Namely, the comparison is made with Rank, Quantile, cyclic  LOESS (locally estimated scatterplot smoothing), and MAD (median absolute deviation) normalization methods.
We also evaluate the performance of C-SHIFT algorithm on real biological data. 
\end{abstract}

Gene expression analysis plays an important role in genomic research. Several omics technologies such as RNAseq and microarrays allow for the collection of massive amounts of simultaneous measurements of gene expression levels of thousands to tens of thousands of genes. Analyzing different patterns of gene expressions helps to gain insight into complex biological phenomena such as development, aging, onset and progression of diseases, and cellular response/reaction to drugs/treatments. Although new technologies are constantly developing, it is well known that all of them generate some technical noise which affects the measured gene expression levels \cite{hartemink2001maximum,scherer2009batch}. To extract accurate biological information it becomes necessary to normalize the data to filter out/compensate for these non-biological noises/errors. Normalization is a crucial pre-processing step in the gene expression data analysis. The gene expression data will vary significantly after different normalization methods. Thus, the results of further data analysis (e.g. gene expression network) will be critically dependent on a choice of a normalization technique.
A variety of normalization procedures have been used on gene expression data sets. See \cite{pradervand2009impact,rao2008comparison,quackenbush2002microarray,bilban2002normalizing,bolstad2003comparison,park2003evaluation,saccenti2017correlation,liu2019normalization} and reference therein for a review and comparison of current normalization strategies.
In this paper we develop a novel normalization technique, called the covariance shift (C-SHIFT) method, and compare it to the following well known normalization methods used in large scale data analysis: Rank, Quantile, cyclic  LOESS (locally estimated scatterplot smoothing), and MAD (median absolute deviation). See \cite{qiu2013impact, bolstad2003comparison, amaratunga2001analysis, quackenbush2002microarray} and references therein for more details on the above listed normalization methods.
There is an important distinction: while Rank, Quantile, LOESS and other normalizations normalize the data, C-SHIFT algorithm normalizes the covariances.
The need to normalize the covariances is caused by the presence of bias.


\subsection{Bias.} Consider a situation where the gene expression data is subjected to multiplicative noise (aka bias).
Let $M$ be the number of genes and $N$ be the number of measurements.
Next, we let $X_n^{(i)}$ denote the true gene expression, where subscript index $n$ stands for the $n$-th gene in the network 
and the superscript index $i$ stands for the $i$-th measurement.  The observed gene expression, denoted by $\widetilde{X}_n^{(i)}$, is different from $X_n^{(i)}$
due to all gene expressions in the $i$-th measurement being distorted by a multiplicative noise $W^{(i)}$, i.e.,
\begin{equation}\label{eqn:multNoise}
\widetilde{X}_n^{(i)}=W^{(i)} X_n^{(i)},
\end{equation}
where random variables $X_n^{(i)}$ are independent of the variable $W^{(i)}$. 
Additionally random variables $W^{(i)}$ ($i=1,\hdots,N$) are assumed to be independent and identically distributed (i.i.d.).
Here, both the observed and the true gene expressions are positive, i.e., $X_n^{(i)}>0$ and $W^{(i)}>0$. 

\medskip
\noindent
In biology, the multiplicative noise $W^{(i)}$ is referred to as the bias. 
The bias is prompted by random events causing an error in the measurement of the total amount of RNA.
Such random events are often related to different levels of tissue preservation in different samples that leads to variability of RNA degradation.
Consequently, this leads to an RNA detection problem.
Additionally, there are other technical reasons for an error in the measurement of the total amount of RNA in a given sample that may lead to a bias in \eqref{eqn:multNoise}.
All other noise (e.g. misreading parts of RNA) goes into the variable $X_n^{(i)}$.

\medskip
\noindent
The multiplicative noise in \eqref{eqn:multNoise} implies the corresponding additive noise (bias)
in the logarithimic gene expression data:
\begin{equation}\label{eqn:additiveNoise}
\widetilde{Y}_n^{(i)}=Y_n^{(i)}+V^{(i)},
\end{equation}
where  we let
$\widetilde{Y}_n^{(i)}:=\log\widetilde{X}_n^{(i)}$, $~Y_n^{(i)}:=\log X_n^{(i)}$, and $~V^{(i)}:=\log W^{(i)}$.

\medskip
\noindent
\subsection{Impact of bias on covariances and correlations.}
While the bias may not appear critical, they are known to cause significant problems in 
the analyses of gene correlation structure. Specifically, this phenomenon is known to cause the disappearance of the large magnitude negative correlations in the observed biological data, $\widetilde{X}_n$ and $\widetilde{Y}_n$, which hampers the ability to perform certain types of statistical data analysis, such as the false discovery rate (FDR) method.

The bias, whether multiplicative as in \eqref{eqn:multNoise} or additive as in \eqref{eqn:additiveNoise}, causes the correlations to be shifted away from $-1$.
In partiocular, the independent additive noise in \eqref{eqn:additiveNoise} implies an increase of theoretical covariance as
\begin{equation}\label{eqn:covLog}
Cov(\widetilde{Y}_n, \widetilde{Y}_m)=Cov(Y_n, Y_m)+\omega,
\end{equation}
where $\omega=Var(V)>0$. Consequently, the correlations in the logarithmic data are equal to
\begin{equation}\label{eqn:corrLog}
corr(\widetilde{Y}_n, \widetilde{Y}_m) =\frac{Cov(Y_n,Y_m)+\omega}{\sqrt{\Big(Var(Y_n)+\omega \Big)\Big(Var(Y_m)+\omega \Big)}}.
\end{equation}
If $Cov(Y_n,Y_m)$ is negative, by adding $\omega>0$ in the numerator and the denominator, we obtain
$$corr(\widetilde{Y}_n, \widetilde{Y}_m) > corr(Y_n, Y_m).$$
Hence, the disappearance of large magnitude negative correlations.

\medskip
\noindent
The purpose of the covariance shift (C-SHIFT) algorithm developed in this current manuscript is to normalize covariances in the logarithmic data and restore the correlations, 
thus offsetting the impact of the additive bias in \eqref{eqn:additiveNoise}.
Consequently, the comparison of C-SHIFT covariance normalization algorithm with methods of normalizing data such as Rank, Quantile, or LOESS can only be done
in terms of the effectiveness of recovering true empirical correlations. This comparison will be implemented on synthetic data in Section \ref{sec:num}
and on real biological data in Section \ref{sec:real}.

\medskip
\noindent
The problem of improving the existing and developing new normalization methods is very important for scientists working with biological data.  
The fact that normalization alters the data-correlation structure was stated in Saccenti \cite{saccenti2017correlation}. 
Besides \cite{saccenti2017correlation} gives a comprehensive overview of normalization methods.
In Bolstad {\it et al.} \cite{bolstad2003comparison} the authors compare three complete data normalization methods (cyclic LOESS,  contrast based method, and quantile), that make use of data from all arrays in an experiment, with two methods that make use of a baseline array. The comparison was done on two publicly available datasets with the results favoring the complete data methods.
For more on the normalization methods, see \cite{amaratunga2001analysis,fan2005semilinear,hartemink2001maximum,hu2007enhanced,quackenbush2002microarray,reilly2003method,smyth2003normalization,cheng2016icn,cheng2016crossnorm}.



\subsection{Paper structure and workflow diagram.}
The paper is organized as follows. In Section \ref{sec:theor}, we formulate C-SHIFT method from the underlying theoretical considerations. 
Section \ref{sec:theor} also contains Lemma \ref{lem:alpha}, Lemma \ref{lem:Hessian}, and Theorem \ref{thm:Convex}, necessary for the optimization approach to work. 
The pseudocode for the C-SHIFT algorithm is given in Section \ref{sec:alg}.
Section \ref{sec:num} contains numerical experiments on two synthetic datasets, 
one generated using random covariance method (RCM) and another generated with a cascade method. 
The effectiveness of recovering the true empirical correlation matrices is evaluated for C-SHIFT, Rank, Quantile, LOESS, and MAD.
Section \ref{sec:real} evaluates the outcomes of correlation recovery using six real biological datasets from GEO depository.
A comparison is made of C-SHIFT with Rank, Quantile, and LOESS.
The results and future directions are discussed in Section \ref{sec:discuss}.
Finally, Section \ref{sec:proofs} contains the proofs 
of Proposition \ref{prop:1C1}, Lemma \ref{lem:alpha}, Lemma \ref{lem:Hessian}, and Theorem \ref{thm:Convex}.

\medskip
\noindent
In Table \ref{tablenotation} we clarify the notation and introduce a few important quantities that will be used throughout the paper.
Workflow diagram can be found in Fig.~\ref{fig:FlowChart}.

\begin{table*}[htbp]\caption{Table of Notations }
\centering 
\begin{tabular}{r c p{10cm} }
\toprule
$N$ & $\triangleq$ & number of measurements\\
$M$ & $\triangleq$ & number of genes\\
$X_n^{(i)}$ & $\triangleq$ & $i$-th measurement of true gene expression of the $n$-th gene, $X_n^{(i)}>0$ \\
$W^{(i)}$ & $\triangleq$ & i.i.d. multiplicative noise variable a.k.a bias, $W^{(i)}>0$, independent of $X_n^{(i)}$ \\
$\tilde X_n^{(i)}$ & $\triangleq$ & $i$-th measurement of observed gene expression of the $n$-th gene, $\tilde X_n^{(i)} = W^{(i)}X_n^{(i)}$ \\
$Y_n^{(i)}$ & $\triangleq$ & $i$-th measurement of true logarithmic gene expression of the $n$-th gene, $Y_n^{(i)} = \log{X_n^{(i)}}$ \\ 
$V^{(i)}$ & $\triangleq$ & additive noise variable (bias), $V^{(i)} = \log{W^{(i)}}$ \\
$\tilde Y_n^{(i)}$ & $\triangleq$ & $i$-th measurement of the observed logarithmic gene expression of the $n$-th gene, $\tilde Y_n^{(i)} = \log{\tilde X_n^{(i)}} = Y_n^{(i)}+V^{(i)}$ \\
$\tilde C$ & $\triangleq$ & empirical covariance matrix of observed logarithmic gene expression data $\tilde Y_n^{(i)}$, $\tilde C = \left(\widehat{Cov}(\tilde Y_n, \tilde Y_m)\right)_{n,m} \in \mathbb{R}^{M \times M}$\\
$ C$ & $\triangleq$ & empirical covariance matrix of true logarithmic gene expression data $Y_n^{(i)}$, $C = \left(\widehat{Cov}(Y_n,Y_m)\right)_{n,m} \in \mathbb{R}^{M \times M}$\\
$\hat a_n $ & $\triangleq$ & empirical covariance between true logarithmic gene expression data $Y_n$ and additive bias $V$, $\hat a_n = - \widehat{Cov}(Y_n,V)$, $n = \overline{1,M}$ \\
$\hat \omega $ & $\triangleq$ & empirical variance of additive bias $V^{(i)}$, $\hat \omega  = \widehat{Var}(V) > 0$\\
\bottomrule
\end{tabular}
\label{tablenotation}
\end{table*}

\begin{figure}
\begin{center}
\includegraphics[width=\linewidth]{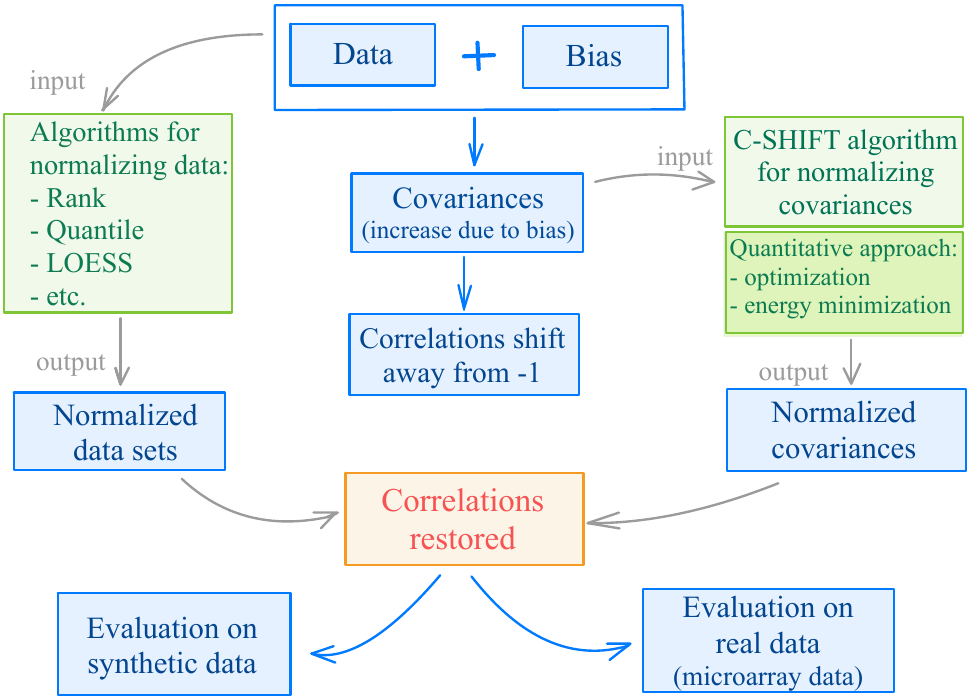}
\caption{Workflow diagram.}
\label{fig:FlowChart}
\end{center}
\end{figure}

\section{Theoretical derivations}\label{sec:theor}

Denote by $\widehat{Cov}$ the empirical covariances taken over $N$ samples for each of $\binom{M}{2}$ pairs of genes. 
Similarly, let $\widehat{Var}$ denote the empirical variance.
Then, equation \eqref{eqn:additiveNoise} yields the observed empirical covariance
\begin{equation}\label{eqn:covLogEmpirical}
\widehat{Cov}(\widetilde{Y}_n, \widetilde{Y}_m)=\widehat{Cov}(Y_n,Y_m)-\hat{a}_n-\hat{a}_m+\hat{\omega}
\end{equation}
for all pairs of gene indices $n$ and $m$, where $\hat{a}_n=-\widehat{Cov}(Y_n,V)$ for all $n=1,\hdots,M$, and $\hat{\omega}=\widehat{Var}(V)>0$.
As is often the case, $\hat{\omega}$ can be very large relative to the values of $\hat{a}_n$, causing the disappearance of the large magnitude negative correlations
in empirical data. 

The goal of the covariance shift (C-SHIFT) normalization method introduced here is the recovery of the true empirical covariances $\widehat{Cov}(Y_n,Y_m)$
and the respective true empirical correlations in the case of the logarithmic gene expression data or any other situations with additive noise as in \eqref{eqn:additiveNoise}.

\medskip
\noindent
Let 
$\widetilde{C}=\big(\widehat{Cov}(\widetilde{Y}_n, \widetilde{Y}_m)\big)_{n,m}$ 
be the empirical covariance matrix of the observed data $\widetilde{Y}_n^{(i)}$, and let 
$\,C=\Big(\widehat{Cov}(Y_n, Y_m)\Big)_{n,m}\,$ 
be the empirical covariance matrix of the cleaned data $Y_n^{(i)}$ (i.e., the true empirical covariance) that we desire to recover.
Formula \eqref{eqn:covLogEmpirical} rewritten in the matrix form states
\begin{equation}\label{eqn:C}
C =\widetilde{C} + \hat{a}{\bf 1}\!^T+ {\bf 1} \widehat{a}^T-\hat{\omega} \, {\bf 1\!1}\!^T,
\end{equation}
where $\hat{a}=\big(\hat{a}_1, \hdots,\hat{a}_M\big)^T$,
and ${\bf 1}$ denotes the column vector of $1$'s, hence ${\bf 1\!1}\!^T$ is a square matrix of $1$'s.

Our goal here is to estimate $\hat{a}$ and $\hat{\omega}$ in \eqref{eqn:C}, and thus recover the true empirical covariance matrix $C$.
We assume large dimension $M$. 
There will be two cases. 

{\it Case I:} If $\det(\widetilde{C})=0$ (e.g. $N < M$), we make a small perturbation of the diagonal entries of $\widetilde{C}$ (the variances) resulting in 
a new covariance matrix being positive definite whose smallest eigenvalue is still very close to zero. Next, we use energy minimization 
to estimate $\hat{a}_n$ and $\hat{\omega}$ in \eqref{eqn:C}. 

{\it Case II:} If $\widetilde{C}$ is positive definite (full rank), our approach exploits the phenomenon sometimes referred to as the 
{\it curse of dimensionality} \cite{bach2017breaking,radovanovic2010hubs} and sometimes as the {\it blessing of dimensionality}  \cite{gorban2018blessing,Donoho00high-dimensionaldata,kainen1997utilizing},
postulating that in higher dimensions almost all data points are located near extrema (i.e., in the outer shell)\footnote{In this paper we will refer to the phenomenon as the blessing of dimensionality rather than the curse of dimensionality.}. 
In other words, for large $M$, we anticipate the smallest eigenvalue of $C$ to be near zero.
As a rigorous bound, we observe that if some of the correlations $corr(Y_n, Y_m)$ are located in $[-1,\delta -1]$ interval, then 
the smallest eigenvalue of $C$ is located within $\big[0, \delta \min\limits_n  \widehat{Var}(Y_n)\big]$ interval.
Thus, as in Case I, under the blessing of dimensionality assumption, we again use energy minimization for estimating $\hat{a}_n$ and $\hat{\omega}$.

\medskip
\noindent
Next, we will need the following result.
\begin{prop}\label{prop:1C1}
Suppose $\mathcal{M}$ is a symmetric positive definite square matrix, and let
$$v^* := \max\big\{v:\, \mathcal{M}\!-\!v\, {\bf 1\!1}\!^T \text{  is positive semidefinite}\,\big\}.$$
Then,
$$v^* = \frac{1}{{\bf 1}\!^T \,\mathcal{M}^{-1} \, {\bf1}}.$$
\end{prop}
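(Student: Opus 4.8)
The plan is to reduce the definition of $v^*$ to a constrained quadratic optimization problem and solve it explicitly. Observe that $\mathcal{M} - v\,{\bf 1}{\bf 1}^T$ is positive semidefinite exactly when $x^T\mathcal{M}x - v(x^T{\bf 1})^2 \ge 0$ for every vector $x$. For $x$ with $x^T{\bf 1}=0$ this holds automatically because $\mathcal{M}$ is positive definite, so the binding constraints come only from $x$ with $x^T{\bf 1}\neq 0$. By homogeneity I can rescale such $x$ so that $x^T{\bf 1}=1$, which gives
\[
v^* = \min_{x^T{\bf 1}=1} x^T\mathcal{M}x,
\]
and the whole proposition reduces to minimizing the quadratic form $x^T\mathcal{M}x$ over the affine hyperplane $\{x : {\bf 1}^Tx = 1\}$.

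Next I would solve this minimization with a single Lagrange multiplier. Setting the gradient of $x^T\mathcal{M}x - \lambda({\bf 1}^Tx - 1)$ to zero gives $2\mathcal{M}x = \lambda{\bf 1}$, hence $x = \tfrac{\lambda}{2}\mathcal{M}^{-1}{\bf 1}$; imposing ${\bf 1}^Tx=1$ fixes $\lambda/2 = 1/({\bf 1}^T\mathcal{M}^{-1}{\bf 1})$ and yields the minimizer $x^\star = \mathcal{M}^{-1}{\bf 1}/({\bf 1}^T\mathcal{M}^{-1}{\bf 1})$. Substituting back, the minimum value telescopes to $1/({\bf 1}^T\mathcal{M}^{-1}{\bf 1})$, which is exactly the claimed expression for $v^*$. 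Because $\mathcal{M}\succ 0$ the objective is strictly convex, so this critical point is the global minimum rather than a saddle, and positive definiteness of $\mathcal{M}^{-1}$ guarantees ${\bf 1}^T\mathcal{M}^{-1}{\bf 1}>0$, so the formula is well defined and positive.

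The step I expect to need the most care is justifying that the maximum defining $v^*$ is actually attained and equals this infimum, i.e.\ that the matrix remains positive semidefinite at the boundary value and that discarding the directions with $x^T{\bf 1}=0$ is legitimate; attainment itself follows since the set $\{v:\mathcal{M}-v\,{\bf 1}{\bf 1}^T\succeq 0\}$ is closed and bounded above. A cleaner alternative that sidesteps the Lagrange computation is a Schur-complement argument: the block matrix $\begin{pmatrix}\mathcal{M} & \sqrt{v}\,{\bf 1}\\ \sqrt{v}\,{\bf 1}^T & 1\end{pmatrix}$ is positive semidefinite iff its Schur complement relative to the $(2,2)$ entry, namely $\mathcal{M} - v\,{\bf 1}{\bf 1}^T$, is positive semidefinite, and simultaneously (since $\mathcal{M}\succ 0$) iff its Schur complement relative to $\mathcal{M}$, namely $1 - v\,{\bf 1}^T\mathcal{M}^{-1}{\bf 1}$, is nonnegative. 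Equating these two equivalent conditions gives $\mathcal{M} - v\,{\bf 1}{\bf 1}^T \succeq 0 \iff v \le 1/({\bf 1}^T\mathcal{M}^{-1}{\bf 1})$, from which the formula for $v^*$ follows at once.
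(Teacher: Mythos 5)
Your main argument is correct and is essentially the paper's own proof: both reduce positive semidefiniteness of $\mathcal{M}-v\,{\bf 1}{\bf 1}^T$ to minimizing the quadratic form $x^T\mathcal{M}x$ over a hyperplane $\{x:{\bf 1}^Tx=\text{const}\}$ and solve that problem by a single Lagrange multiplier, obtaining $x=\tfrac{\lambda}{2}\mathcal{M}^{-1}{\bf 1}$ and the value $1/({\bf 1}^T\mathcal{M}^{-1}{\bf 1})$; your write-up is in fact a bit more careful than the paper's, since you explicitly dispose of the directions with ${\bf 1}^Tx=0$, normalize to ${\bf 1}^Tx=1$, and invoke strict convexity to certify that the critical point is the global minimum, all of which the paper leaves implicit. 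The Schur-complement argument you add at the end is a genuinely different route that the paper does not take: by comparing the two Schur complements of the block matrix $\begin{pmatrix}\mathcal{M} & \sqrt{v}\,{\bf 1}\\ \sqrt{v}\,{\bf 1}^T & 1\end{pmatrix}$ one gets the equivalence $\mathcal{M}-v\,{\bf 1}{\bf 1}^T\succeq 0 \iff v\le 1/({\bf 1}^T\mathcal{M}^{-1}{\bf 1})$ with no calculus at all, which also settles attainment at the boundary value automatically --- something the Lagrange computation alone does not address.
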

\noindent
The proof of Proposition \ref{prop:1C1} is given in Section \ref{sec:proofs}.

 \medskip
 \noindent
Suppose the empirical covariance matrix $\widetilde{C}$ is positive definite, i.e.,  $\widetilde{C}$ is of full rank.
Consider values of a column vector $\alpha=(\alpha_1, \hdots,\alpha_M)^T$ such that 
$$\widetilde{C}+ \alpha{\bf 1}\!^T+ {\bf 1}\!^T \alpha$$ is positive definite. 
If we let 
\begin{equation}\label{eqn:valpha}
v(\alpha):=\frac{1}{{\bf 1}\!^T \,\big(\widetilde{C}+ \alpha{\bf 1}\!^T+ {\bf 1} \alpha^T \big)^{-1} \, {\bf1}},
\end{equation}
then Prop.~\ref{prop:1C1} implies
\begin{equation}\label{eqn:Calpha}
C_\alpha:=\widetilde{C}+ \alpha{\bf 1}\!^T+ {\bf 1} \alpha^T-v(\alpha){\bf 1\!1}\!^T
\end{equation}
is positive semidefinite with $\det(C_\alpha)=0$.

\medskip
\noindent
Next, recall the quantities $\hat{a}$ and $\hat{w}$ in \eqref{eqn:C}.
If $\widetilde{C}$ is rank deficient,
we perturb its diagonal entries by adding small positive (random or deterministic) values, and
if $\widetilde{C}$ has full rank, we assume the blessing of dimensionality phenomenon holds.
Thus, in either case, we work under the assumption that $\widetilde{C}$ is positive definite with its smallest eigenvalue located near zero.
Then, Prop.~\ref{prop:1C1} implies $\hat{w} \approx v(\hat{a})$, where $v(\alpha)$ is as defined in \eqref{eqn:valpha}.
Therefore, letting $\alpha=\hat{a}$ in \eqref{eqn:Calpha}, we will have $C_{\hat{a}}$ approximating $C$ expressed as in \eqref{eqn:C}.

\medskip
\noindent
Now, for a matrix $X$, let $\|X\|_F$ denote the Frobenius norm of $X$ and let $\mathcal{E}(X)=\frac{1}{2} \|X\|_F^2$ be 
the energy function. 
Our next assumption states that $\hat{a}$ can be estimated by the minimizer $\alpha^*$ of the energy function $\mathcal{E}(C_\alpha)$, 
i.e., we estimate $\hat{a}$ with 
$$\alpha^*=\text{argmin}\|C_\alpha\|_F.$$
The assumption is additionally justified by the observation that a random adjustment of the covariance via an additive noise (bias) as in \eqref{eqn:covLogEmpirical}
will result in an energy increase, i.e., $\mathcal{E}(\widetilde{C})>\mathcal{E}(C)$. 

\medskip
\noindent
Matrix $C_{\alpha^*}$ will approximate $C_{\hat{a}}$, which, in turn, approximates the desired true empirical covariance matrix $C$.
The covariance shift (C-SHIFT) algorithm works as follows: it uses optimization algorithms to estimate $\alpha^*$ and outputs $C_{\alpha^*}$ as an estimate for $C$.
See Algorithm \ref{alg:C-SHIFT} in Section \ref{sec:alg}.

\medskip
\noindent
The following lemma yields a close form expression for the gradient $\nabla  \|C_\alpha\|_F^2$ that will be used to estimate $\alpha^*$ which minimizes $\|C_\alpha\|_F$. 
\begin{lem}\label{lem:alpha}
Suppose the empirical covariance matrix  $\widetilde{C}$ is of full rank, and the quantities $C_\alpha$ and  $v(\alpha)$ are as in \eqref{eqn:Calpha} and \eqref{eqn:valpha}. 
Then, the gradient of the Frobenius norm squared is given by
\begin{align}\label{eqn:gradient}
\frac{1}{4} \nabla  \|C_\alpha\|_F^2 =& \,M\alpha + \widetilde{C}{\bf 1}+[a-M\,v(\alpha)]{\bf 1} \\ &+ [M^2\,v^2(\alpha)-c\,v(\alpha)-2Ma\,v(\alpha)]A_\alpha^{-1} {\bf 1}, \nonumber 
\end{align}
where $\|\cdot\|_F$ denotes the Forbenius norm,  and we let 
\begin{equation}\label{eqn:Aalpha}
A_\alpha \nonumber :=\widetilde{C}+ \alpha{\bf 1}\!^T+ {\bf 1} \alpha^T, \quad 
c:={\bf 1}\!^T \!\widetilde{C} {\bf1}, \quad a:={\bf 1}\!^T \alpha.
\end{equation}  
\end{lem}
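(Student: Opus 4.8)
The plan is to express the Frobenius norm squared as a trace and differentiate it term by term. Since $C_\alpha$ is symmetric, $\|C_\alpha\|_F^2 = \Tr(C_\alpha^2)$. Writing the symmetric rank-two update as $A_\alpha=\widetilde{C}+\alpha{\bf 1}^T+{\bf 1}\alpha^T$ and $C_\alpha=A_\alpha-v(\alpha){\bf 1\!1}^T$, I would expand
$$\|C_\alpha\|_F^2=\Tr(A_\alpha^2)-2\,v(\alpha)\,{\bf 1}^T A_\alpha{\bf 1}+M^2v(\alpha)^2,$$
using $\Tr(A_\alpha{\bf 1\!1}^T)={\bf 1}^T A_\alpha{\bf 1}$ and $\Tr\big(({\bf 1\!1}^T)^2\big)=({\bf 1}^T{\bf 1})^2=M^2$. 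This reduces the computation to differentiating three scalar functions of $\alpha$, and along the way I would record the two identities $A_\alpha{\bf 1}=\widetilde{C}{\bf 1}+M\alpha+a{\bf 1}$ and ${\bf 1}^T A_\alpha{\bf 1}=c+2Ma$, where $a={\bf 1}^T\alpha$.

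Next I would handle the elementary pieces. Since $\partial A_\alpha/\partial\alpha_k=e_k{\bf 1}^T+{\bf 1}e_k^T$ (with $e_k$ the $k$-th basis vector), the chain rule together with the symmetry of $A_\alpha$ gives $\nabla_\alpha\Tr(A_\alpha^2)=4A_\alpha{\bf 1}=4(\widetilde{C}{\bf 1}+M\alpha+a{\bf 1})$, which already accounts for the $M\alpha$, $\widetilde{C}{\bf 1}$, and part of the ${\bf 1}$ contributions after the final division by $4$. The quadratic form ${\bf 1}^T A_\alpha{\bf 1}=c+2Ma$ is affine in $\alpha$, so its gradient is simply $2M{\bf 1}$.

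The main obstacle is differentiating $v(\alpha)=\big({\bf 1}^T A_\alpha^{-1}{\bf 1}\big)^{-1}$, which requires the inverse-derivative identity $\partial A_\alpha^{-1}/\partial\alpha_k=-A_\alpha^{-1}\big(e_k{\bf 1}^T+{\bf 1}e_k^T\big)A_\alpha^{-1}$. Setting $u:=A_\alpha^{-1}{\bf 1}$ and using symmetry so that ${\bf 1}^T A_\alpha^{-1}=u^T$, I expect each component of $\nabla_\alpha\big({\bf 1}^T A_\alpha^{-1}{\bf 1}\big)$ to collapse, via the two rank-one pieces, to $-2\big({\bf 1}^T A_\alpha^{-1}{\bf 1}\big)u_k$; the reciprocal rule then yields the clean form $\nabla_\alpha v=2\,v(\alpha)\,A_\alpha^{-1}{\bf 1}$. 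Keeping the bookkeeping of the two symmetric rank-one terms straight here, and correctly tracking the factors of $v(\alpha)$ through the reciprocal, is the delicate step.

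Finally I would apply the product rule to the middle term $-2v(\alpha)\big({\bf 1}^T A_\alpha{\bf 1}\big)$ and the chain rule to $M^2v(\alpha)^2$, collect all contributions, and divide by $4$. Grouping the result by the vectors $\alpha$, $A_\alpha^{-1}{\bf 1}$, $\widetilde{C}{\bf 1}$, and ${\bf 1}$, and substituting ${\bf 1}^T A_\alpha{\bf 1}=c+2Ma$ into the coefficient multiplying $A_\alpha^{-1}{\bf 1}$, should reproduce \eqref{eqn:gradient} exactly: the coefficient of $A_\alpha^{-1}{\bf 1}$ becomes $M^2v^2(\alpha)-c\,v(\alpha)-2Ma\,v(\alpha)$, the coefficient of ${\bf 1}$ becomes $a-Mv(\alpha)$, and the remaining terms give $M\alpha+\widetilde{C}{\bf 1}$.
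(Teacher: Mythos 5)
Your proposal is correct and follows essentially the same route as the paper: both expand $\|C_\alpha\|_F^2$ as a quadratic expression in $\alpha$ and $v(\alpha)$, establish the key identity $\nabla v(\alpha)=2v(\alpha)A_\alpha^{-1}{\bf 1}$ from the derivative of the matrix inverse, and then differentiate term by term and collect coefficients. The only difference is organizational: the paper first expands the square fully into scalars, namely $\|\widetilde{C}\|_F^2+2M\sum_i\alpha_i^2+M^2v^2(\alpha)+4\,{\bf 1}^T\widetilde{C}\alpha+2a^2-2c\,v(\alpha)-4Ma\,v(\alpha)$, whereas you keep $\Tr\big(A_\alpha^2\big)$ intact and differentiate it via $\nabla\Tr\big(A_\alpha^2\big)=4A_\alpha{\bf 1}$, which is equivalent and slightly tidier bookkeeping.
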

\noindent
The proof of Lemma \ref{lem:alpha} is given in Section \ref{sec:proofs}.

\medskip
\noindent
First, we observe that $C_\alpha$ is invariant under the addition of multiples of ${\bf 1}$. Thus, without loss of generality, we restrict the domain to a hyperplane $a={\rm Const}$.
Next, we notice that ${\bf 1}^T\nabla \|C_\alpha\|_F^2=0$ in \eqref{eqn:gradient}. Thus, in the gradient descent method, the value of $a$ remains constant, i.e., 
throughout the algorithm, vector $\alpha$ remains on the same hyperplane ${\bf 1}^T\alpha={\rm Const}$.

\medskip
\noindent
In our next lemma, we find the Hessian of $\|C_\alpha\|_F^2$. 
When minimizing $\|C_\alpha\|_F$ (equivalently, $\|C_\alpha\|_F^2$) both the gradient and the Hessian of $\|C_\alpha\|_F^2$ 
are inputted in the optimization algorithm such as trust-region or gradient descent.
\begin{lem}\label{lem:Hessian}
Suppose the empirical covariance matrix  $\widetilde{C}$ is of full rank, and the quantities $C_\alpha$,  $v(\alpha)$, and $A_\alpha$ are as in \eqref{eqn:Calpha}, \eqref{eqn:valpha},  and \eqref{eqn:Aalpha} respectively. 
Then, the Hessian of $ \|C_\alpha\|_F^2$, denoted by $H_\alpha:={\rm Hess}\big( \|C_\alpha\|_F^2\big)$ is expressed as follows
\begin{align}\label{eqn:hessian}
\frac{1}{4} H_\alpha =& MI+{\bf 1} {\bf1}\!^T -2M\, v(\alpha) \left(A_\alpha^{-1}{\bf 1} {\bf1}\!^T +{\bf 1} {\bf1}\!^T A_\alpha^{-1} \right) \nonumber \\
&+\left(3M^2\,v(\alpha)-c-2Ma\right)v(\alpha) A_\alpha^{-1} {\bf 1}{\bf1}\!^T A_\alpha^{-1}\nonumber \\
&-\left(M^2\,v(\alpha)-c-2Ma\right)A_\alpha^{-1},
\end{align}
where $I$ is the identity matrix, $c={\bf 1}\!^T \!\widetilde{C} {\bf1}$,  $\,a:={\bf 1}\!^T \alpha$.
\end{lem}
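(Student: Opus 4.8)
The plan is to observe that the Hessian $H_\alpha$ is exactly the Jacobian of the gradient vector $g:=\tfrac14\nabla\|C_\alpha\|_F^2$ produced in Lemma~\ref{lem:alpha}; that is, the $(i,j)$ entry of $\tfrac14 H_\alpha$ equals $\partial g_i/\partial\alpha_j$. So I would simply differentiate the closed form of $g$ in \eqref{eqn:gradient} term by term with respect to $\alpha$. Writing $v$ for $v(\alpha)$ and $u:=A_\alpha^{-1}{\bf 1}$, the only differentiation rules required are the ones already established inside the proof of Lemma~\ref{lem:alpha}: the matrix identity \eqref{eqn:daInverse} for $\partial_{\alpha_i}A_\alpha^{-1}$, the gradient formula \eqref{eqn:va_gradient} in the form $\nabla v=2v\,u$, the trivial $\nabla a={\bf 1}$, and the scalar identity ${\bf 1}^T A_\alpha^{-1}{\bf 1}=1/v$ built into the definition \eqref{eqn:valpha}.

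The easy terms come first. The summand $M\alpha$ contributes $MI$, and the constant vector $\widetilde{C}{\bf 1}$ contributes nothing. For the scalar-times-${\bf 1}$ term $[a-Mv]{\bf 1}$, its Jacobian is ${\bf 1}\,\big(\nabla(a-Mv)\big)^T={\bf 1}\big({\bf 1}-2Mv\,u\big)^T={\bf 1}{\bf 1}^T-2Mv\,{\bf 1}{\bf 1}^T A_\alpha^{-1}$, using the symmetry of $A_\alpha^{-1}$. These already account for $MI+{\bf 1}{\bf 1}^T$ and one of the two cross terms in \eqref{eqn:hessian}.

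The substantive term is $\phi\,u$ with $\phi:=M^2 v^2-cv-2Mav$, whose Jacobian is $u(\nabla\phi)^T+\phi\,J_u$. The chain rule together with $\nabla v=2v\,u$ and $\nabla a={\bf 1}$ gives $\nabla\phi=2v(2M^2 v-c-2Ma)u-2Mv\,{\bf 1}$, while differentiating $u=A_\alpha^{-1}{\bf 1}$ through \eqref{eqn:daInverse} and collapsing the scalars ${\bf 1}^T A_\alpha^{-1}{\bf 1}=1/v$ and $\bar{e}_j^T A_\alpha^{-1}{\bf 1}=u_j$ yields $J_u=-\tfrac1v A_\alpha^{-1}-u u^T$. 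Substituting $u u^T=A_\alpha^{-1}{\bf 1}{\bf 1}^T A_\alpha^{-1}$ and $u{\bf 1}^T=A_\alpha^{-1}{\bf 1}{\bf 1}^T$ expresses everything in the four building blocks $I$, ${\bf 1}{\bf 1}^T$, $A_\alpha^{-1}$, and $A_\alpha^{-1}{\bf 1}{\bf 1}^T A_\alpha^{-1}$ appearing in \eqref{eqn:hessian}.

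What remains, and the step I expect to be the main obstacle, being pure bookkeeping with several sign-sensitive contributions, is collecting like terms. The coefficient of $A_\alpha^{-1}{\bf 1}{\bf 1}^T A_\alpha^{-1}$ receives $2v(2M^2 v-c-2Ma)$ from $u(\nabla\phi)^T$ and $-(M^2 v^2-cv-2Mav)$ from $-\phi\,u u^T$; these must be checked to combine to $(3M^2 v-c-2Ma)v$. The term $-\tfrac{\phi}{v}A_\alpha^{-1}=-(M^2 v-c-2Ma)A_\alpha^{-1}$ supplies the last summand, and the cross term $-2Mv\,u{\bf 1}^T=-2Mv\,A_\alpha^{-1}{\bf 1}{\bf 1}^T$ from $u(\nabla\phi)^T$ joins its transpose produced in the second step to form the symmetric block $-2Mv\big(A_\alpha^{-1}{\bf 1}{\bf 1}^T+{\bf 1}{\bf 1}^T A_\alpha^{-1}\big)$. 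Matching all four blocks against \eqref{eqn:hessian} finishes the argument; the fact that $H_\alpha$ is forced to be symmetric, as the Hessian of a scalar function, provides a convenient consistency check on the bookkeeping.
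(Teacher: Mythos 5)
Your proposal is correct and follows essentially the same route as the paper's own proof: both differentiate the closed-form gradient \eqref{eqn:gradient} term by term, using \eqref{eqn:daInverse}, \eqref{eqn:va_gradient}, $\nabla a={\bf 1}$, and ${\bf 1}^T A_\alpha^{-1}{\bf 1}=1/v(\alpha)$, and then collect the four matrix blocks $I$, ${\bf 1}{\bf 1}^T$, $A_\alpha^{-1}$, and $A_\alpha^{-1}{\bf 1}{\bf 1}^T A_\alpha^{-1}$ exactly as you describe. The only cosmetic difference is the transposition convention (the paper computes $\nabla\big(\nabla\|C_\alpha\|_F^2\big)^T$, i.e.\ entries $\partial_i g_j$, while you compute the Jacobian entries $\partial g_i/\partial\alpha_j$), which is immaterial since the resulting Hessian is symmetric.
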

\noindent
The proof of Lemma \ref{lem:Hessian} is given in Section \ref{sec:proofs}.

\medskip
\noindent
Next, we show the convexity of $\|C_\alpha\|_F^2$. This is needed for the validity of optimization algorithms such as trust-region or gradient descent.
\begin{thm}\label{thm:Convex}
Suppose the empirical covariance matrix  $\widetilde{C}$ is of full rank, and the quantities $C_\alpha$ and  $v(\alpha)$ are as in \eqref{eqn:Calpha} and \eqref{eqn:valpha}. 
Then, the Frobenius norm squared $ \|C_\alpha\|_F^2$ is convex, i.e.,
\begin{equation}\label{eqn:laplacian}
\triangle  \|C_\alpha\|_F^2 \geq 0 \quad \forall \alpha.
\end{equation}
\end{thm}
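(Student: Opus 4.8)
The plan is to observe that the Laplacian $\triangle\|C_\alpha\|_F^2$ is exactly the trace of the Hessian $H_\alpha$ already computed in Lemma~\ref{lem:Hessian}, so proving \eqref{eqn:laplacian} amounts to showing $\Tr(H_\alpha)\ge 0$. First I would take the trace of \eqref{eqn:hessian} term by term, using the cyclic property of the trace and $\Tr(xy^T)=y^Tx$. This collapses every matrix term into a scalar: $\Tr(MI)=M^2$, $\Tr({\bf 1}{\bf 1}^T)=M$, $\Tr(A_\alpha^{-1}{\bf 1}{\bf 1}^T)=\Tr({\bf 1}{\bf 1}^T A_\alpha^{-1})={\bf 1}^T A_\alpha^{-1}{\bf 1}=1/v(\alpha)$ by \eqref{eqn:valpha}, $\Tr(A_\alpha^{-1}{\bf 1}{\bf 1}^T A_\alpha^{-1})={\bf 1}^T A_\alpha^{-2}{\bf 1}$, and $\Tr(A_\alpha^{-1})=\sum_k\mu_k^{-1}$ where $\mu_k>0$ are the eigenvalues of the positive definite matrix $A_\alpha$ (recall $A_\alpha$ is positive definite on the relevant domain of $\alpha$). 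I would also record the identity ${\bf 1}^T A_\alpha{\bf 1}=c+2Ma$, which is immediate from \eqref{eqn:Aalpha}. After substitution the two middle terms contribute $-4M$, and writing $S:={\bf 1}^T A_\alpha{\bf 1}=c+2Ma$, $v:=v(\alpha)$, $n_{-2}:={\bf 1}^T A_\alpha^{-2}{\bf 1}$, and $t_{-1}:=\Tr(A_\alpha^{-1})$, the whole expression reduces to $\tfrac14\Tr(H_\alpha)=M^2-3M+(3M^2v-S)v\,n_{-2}+(S-M^2v)\,t_{-1}$.

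The remaining task, and the main obstacle, is to show this scalar is nonnegative, since the coefficient $3M^2v-S$ has an indefinite sign. I would control it with three Cauchy--Schwarz inequalities for $A_\alpha$: (a) applying it to $A_\alpha^{1/2}{\bf 1}$ and $A_\alpha^{-1/2}{\bf 1}$ gives $({\bf 1}^T A_\alpha{\bf 1})({\bf 1}^T A_\alpha^{-1}{\bf 1})\ge({\bf 1}^T{\bf 1})^2=M^2$, i.e. $S\ge M^2v$; (b) applying it to $A_\alpha^{-1}{\bf 1}$ and ${\bf 1}$ gives $({\bf 1}^T A_\alpha^{-2}{\bf 1})({\bf 1}^T{\bf 1})\ge({\bf 1}^T A_\alpha^{-1}{\bf 1})^2$, i.e. $v^2 n_{-2}\ge 1/M$; and (c) $t_{-1}\ge v\,n_{-2}$, which holds because $v\,n_{-2}=n_{-2}/n_{-1}$ (with $n_{-1}={\bf 1}^T A_\alpha^{-1}{\bf 1}$) is a weighted average of the $\mu_k^{-1}$ and hence at most $\max_k\mu_k^{-1}\le\sum_k\mu_k^{-1}=t_{-1}$.

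Finally I would chain these in the right order. Regrouping the $S$-dependent terms as $S(t_{-1}-v\,n_{-2})$, inequality (c) makes the bracket nonnegative, so (a) licenses the bound $S(t_{-1}-v\,n_{-2})\ge M^2v(t_{-1}-v\,n_{-2})$. Substituting this lower bound, the $t_{-1}$ terms cancel and the expression drops to $M^2-3M+2M^2v^2 n_{-2}$; applying (b) then yields $2M^2v^2 n_{-2}\ge 2M$, so $\tfrac14\triangle\|C_\alpha\|_F^2\ge M^2-M=M(M-1)\ge 0$. The only delicate point is orienting each inequality in the direction that actually helps: one must use (c) first to sign the coefficient of $S$, then (a), then (b), since applying them in any other combination leaves an uncontrolled negative term.
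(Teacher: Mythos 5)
Your proof is correct, and it reaches the conclusion by a genuinely different route than the paper's. The common ground: both arguments take the trace of the Hessian from Lemma~\ref{lem:Hessian}, collapse everything to scalars exactly as you do, and both use the Cauchy--Schwarz bound $Mv^2(\alpha)\,{\bf 1}^T\!A_\alpha^{-2}{\bf 1}\ge 1$ (your inequality (b), the paper's \eqref{eqn:First}). The divergence is in the harder remaining part. The paper works WLOG on the hyperplane $a=0$ and proves the combined inequality \eqref{eqn:Second} by a probabilistic device: it forms the probability mass function $p_i=\frac{1}{M}\big({\bf 1}^Tv_i\big)^2$ over the orthonormal eigenvectors of $A_\alpha$, applies the scalar inequality $Mx+(1-x)^2\ge 1$ together with Jensen's inequality, and sums over $i$ (steps \eqref{eqn:Second01}--\eqref{eqn:Second04}). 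You avoid that machinery entirely, replacing it with two further Cauchy--Schwarz-type facts: the bound $\big({\bf 1}^T\!A_\alpha{\bf 1}\big)\big({\bf 1}^T\!A_\alpha^{-1}{\bf 1}\big)\ge M^2$ (your (a)) and the bound $\Tr\big(A_\alpha^{-1}\big)\ge v(\alpha)\,{\bf 1}^T\!A_\alpha^{-2}{\bf 1}$ (your (c)); the latter is valid since $v(\alpha)\,{\bf 1}^T\!A_\alpha^{-2}{\bf 1}$ is a convex combination of the reciprocal eigenvalues of $A_\alpha$, with nonnegative weights proportional to the reciprocal eigenvalue times the squared projection of ${\bf 1}$ onto the corresponding eigenvector, hence is at most the largest reciprocal eigenvalue, which is at most the trace of $A_\alpha^{-1}$. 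Your chaining order is indeed the crux and your algebra checks out: after bounding $S\big(t_{-1}-v\,n_{-2}\big)\ge M^2v\big(t_{-1}-v\,n_{-2}\big)$ the $t_{-1}$ terms cancel, and (b) finishes with $\tfrac14\triangle\|C_\alpha\|_F^2\ge M(M-1)\ge 0$. As for what each approach buys: yours is more elementary (no probability mass function, no Jensen), needs no reduction to the hyperplane $a=0$ since you carry $S={\bf 1}^T\!A_\alpha{\bf 1}=c+2Ma>0$ throughout, and yields the stronger quantitative conclusion $\triangle\|C_\alpha\|_F^2\ge 4M(M-1)$ rather than mere nonnegativity; the paper's argument, in exchange, isolates the structural inequality \eqref{eqn:Second} as a standalone statement about positive definite matrices, which may be of independent interest, but is considerably more intricate.
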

\noindent
The proof of Theorem \ref{thm:Convex} is given in Section \ref{sec:proofs}.

\begin{figure*}
\begin{center}
\includegraphics[width=\linewidth]{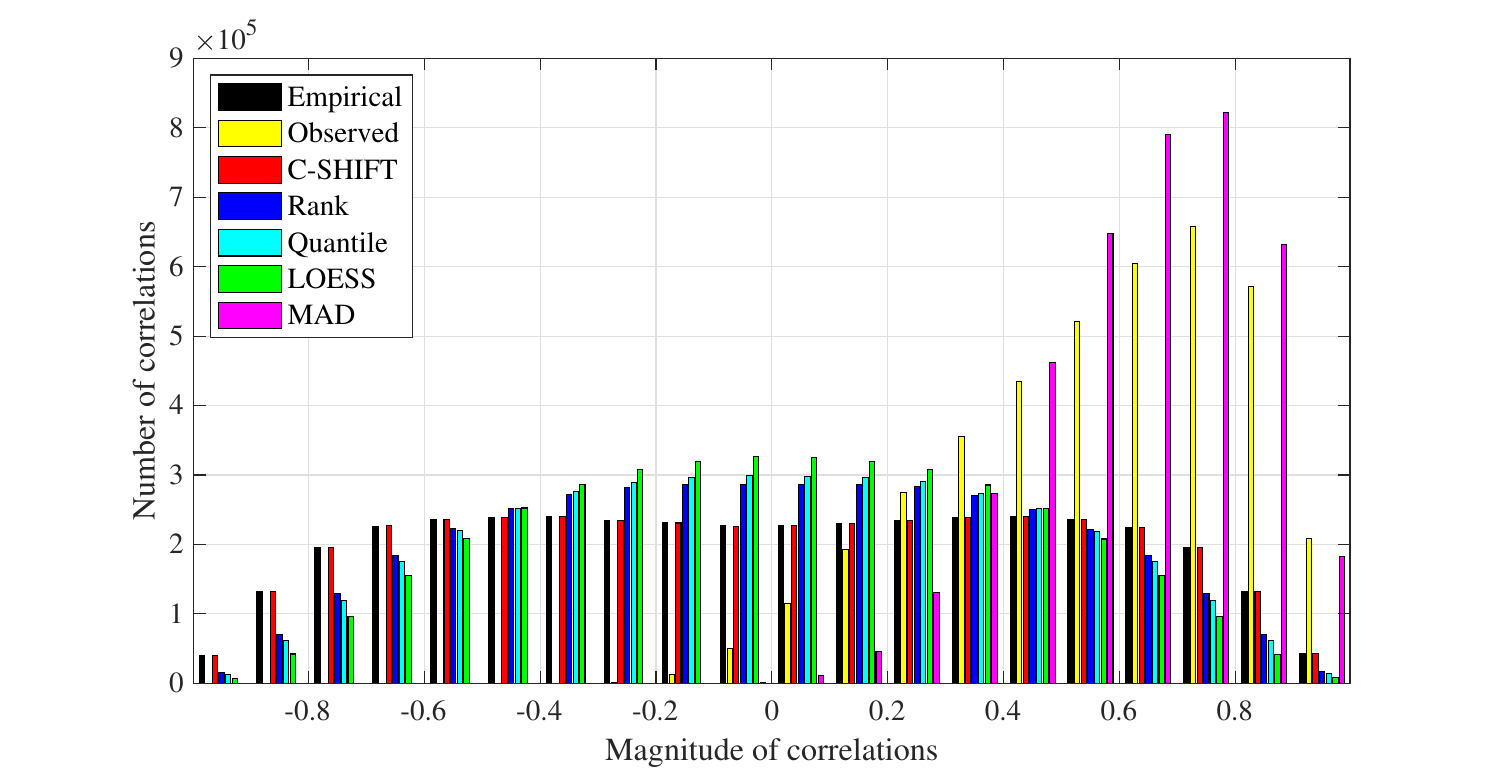}
\caption{A bar graph of correlations for the RCM dataset. On the x-axis we display the range of correlations, partitioned into intervals of length $0.1$. The height of each bar describes the number of correlations that belong to the corresponding interval. Bars of different colors correspond to different correlation matrices, indicated in the legend.}
\label{RCM_corr}
\end{center}
\end{figure*}
\begin{figure*}
\begin{center}
\includegraphics[width=\linewidth]{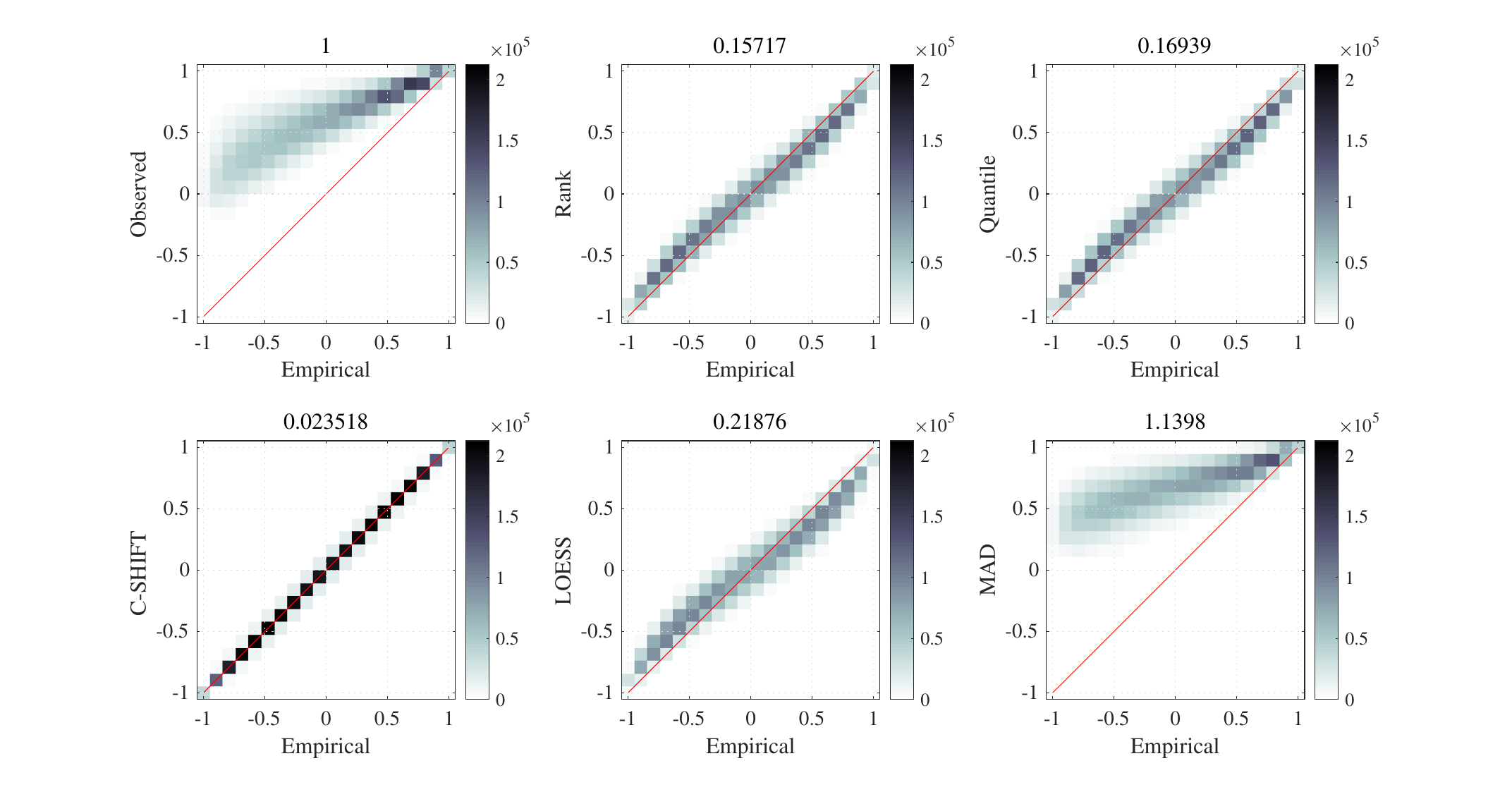}
\caption{The heat maps for the RCM dataset. Each heat map illustrates the transformation of the true empirical correlations $corr(Y_n,Y_m)$ (horizontal axis) after adding bias and applying the corresponding normalization method. In the top left plot the vertical axis represents the observed correlations $corr(\widetilde{Y}_n,\widetilde{Y}_m)$. In the remaining five heat maps, the vertical coordinates represent the correlations after normalization. Going clockwise, these five heat maps are Rank, Quantile, MAD, LOESS, and C-SHIFT. The darker the color, the higher the density. The number on top of each heat map indicates the relative leftover error after normalization. Smaller numbers indicate better recovery performance.}
\label{RCM_heat}
\end{center}
\end{figure*}

\begin{figure*}
\begin{center}
\includegraphics[width=\linewidth]{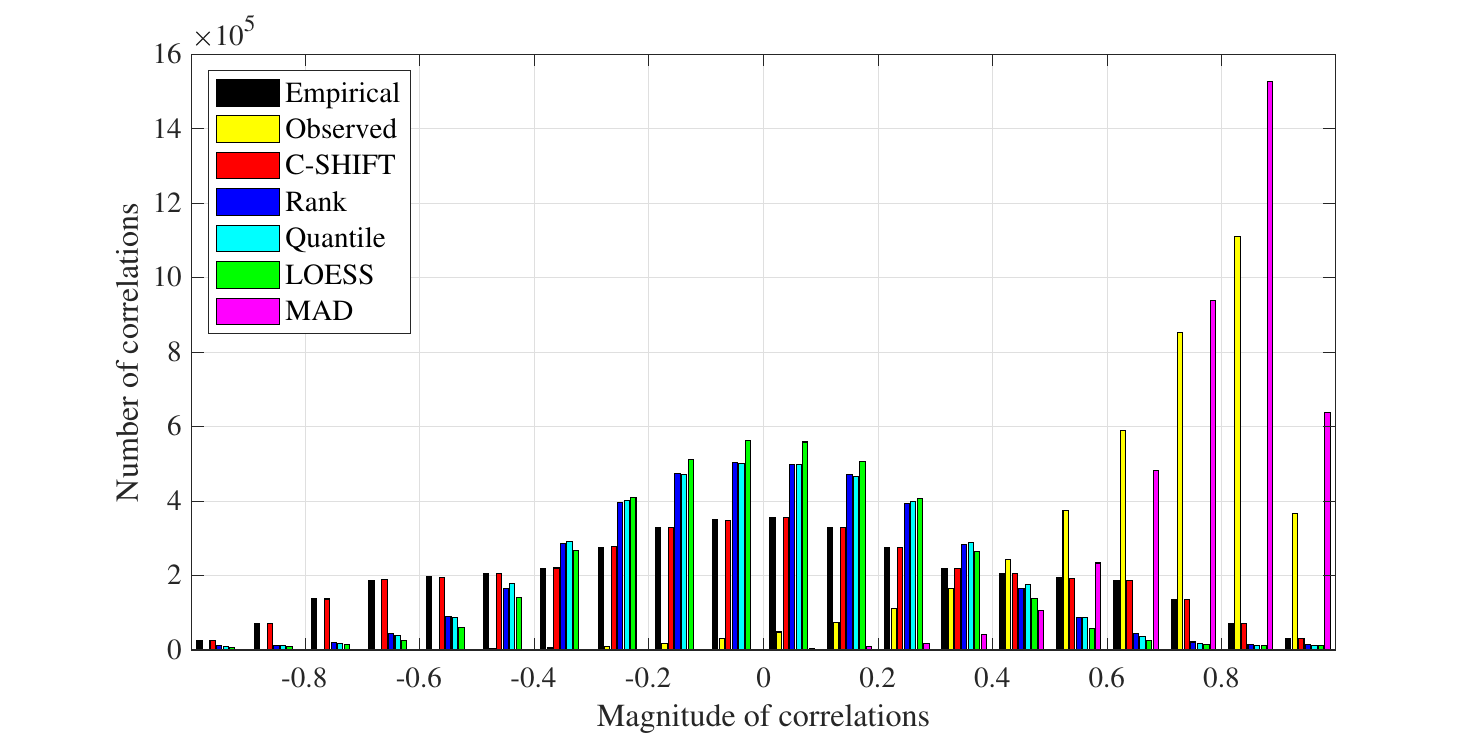}
\caption{A bar graph of correlations for the Cascade dataset. On the x-axis we display the range of correlations, partitioned into intervals of length $0.1$. The height of each bar describes the number of correlations that belong to the corresponding interval. Bars of different colors correspond to different correlation matrices, indicated in the legend.}
\label{Casc_corr}
\end{center}
\end{figure*}
\begin{figure*}
\begin{center}
\includegraphics[width=\linewidth]{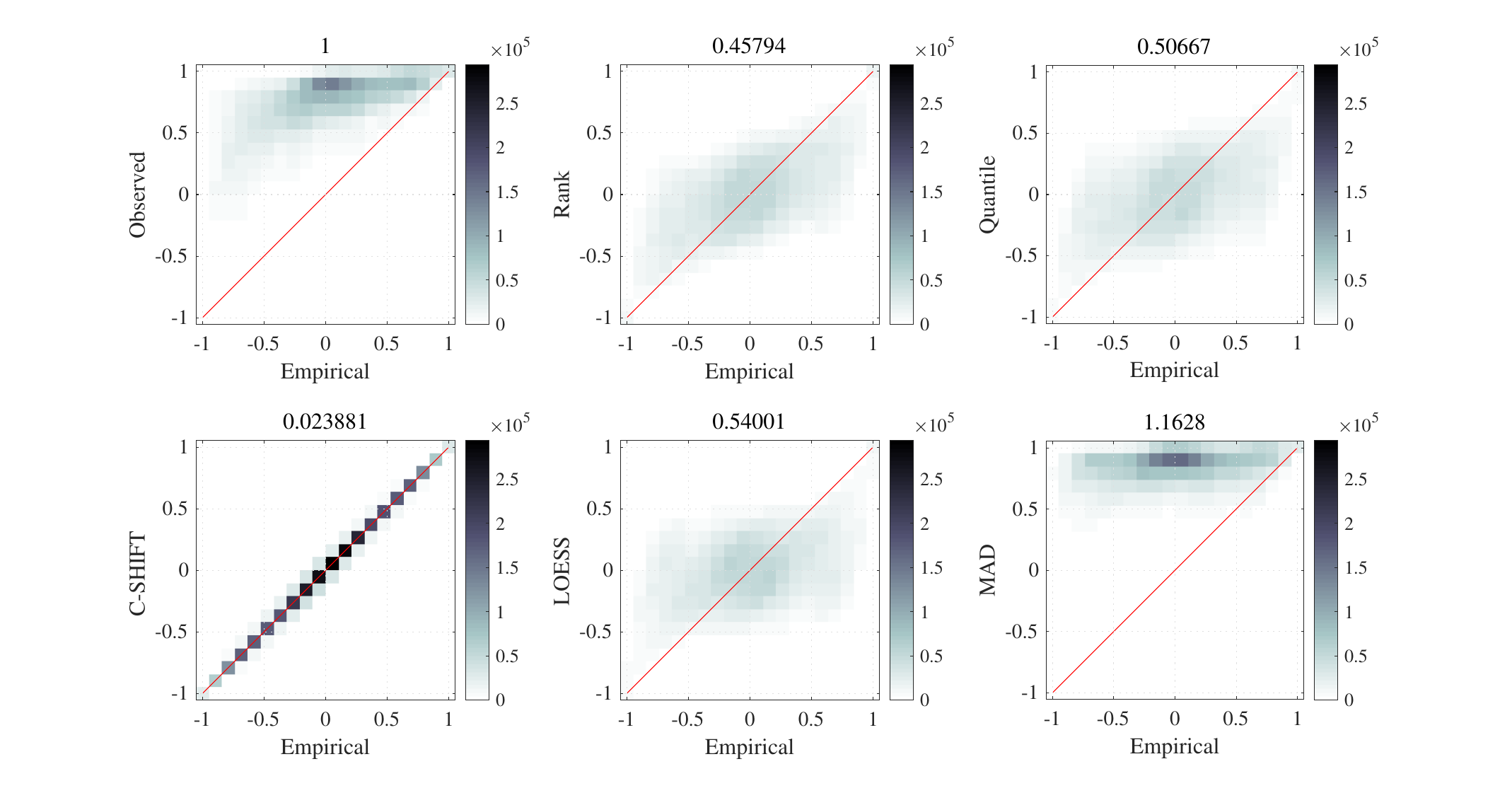}
\caption{The heat maps for the Cascade dataset. Each heat map illustrates the transformation of the true empirical correlations $corr(Y_n,Y_m)$ (horizontal axis) after adding bias and applying the corresponding normalization method. In the top left plot the vertical axis represents the observed correlations $corr(\widetilde{Y}_n,\widetilde{Y}_m)$. In the remaining five heat maps, the vertical coordinates represent the correlations after normalization. Going clockwise, these five heat maps are Rank, Quantile, MAD, LOESS, and C-SHIFT. The darker the color, the higher the density. The number on top of each heat map indicates the relative leftover error after normalization. Smaller numbers indicate better recovery performance.}
\label{Casc_heat}
\end{center}
\end{figure*}

\section{C-SHIFT algorithm and experiments}

In this section we provide the covariance shift (C-SHIFT) algorithm and evaluate its performance on synthetic datasets. Moreover, we compare the C-SHIFT algorithm with the well-known and frequently used normalization methods: Quantile, Rank, LOESS, and Median absolute deviation (MAD). Our empirical results demonstrate that the C-SHIFT algorithm outperforms other methods.

\subsection{C-SHIFT algorithm}\label{sec:alg}

The pseudocode for the C-SHIFT algorithm is given in Algorithm \ref{alg:C-SHIFT}.  Note that the algorithms takes into account both cases: when $\widetilde{C}$ has full rank and when $\widetilde{C}$ is rank deficient (i.e., $\widetilde{C}$ is positive semi-definite but not positive definite). When $\widetilde{C}$ is rank deficient the rank of $\,\widetilde{C}+ \alpha{\bf 1}\!^T+ {\bf 1} \alpha^T$ may exceed the rank $\widetilde{C}$ by no more than $2$, and therefore may also be rank deficient. Therefore, to make $\widetilde{C}$ a full rank we add to it a diagonal matrix ${\sf diag}(f)$, where $f$ is a vector of i.i.d. random variables from ${\sf Unif}[0,1]$. 

To find the optimal $\alpha^* = \arg \min_{\alpha} \|C_{\alpha}\|_F^2$, we use gradient and Hessian, provided in equations \eqref{eqn:gradient} and \eqref{eqn:hessian}, in the trust-region algorithm to minimize $\|C_\alpha\|_F^2$.

\begin{algorithm}[tb]
\caption{ C-SHIFT }
\label{alg:C-SHIFT}
\begin{algorithmic}
\STATE {\bf Input}: observed covariance matrix $\widetilde C$
\STATE {\bf Output}: recovered empirical covariance\\ matrix $C$
\IF {$\widetilde C$ is rank deficient}
\STATE $f \gets $ i.i.d. {\sf Unif}\,[0,1]
\STATE $\tilde C \gets \tilde C+{\sf diag}(f)$
\ENDIF
\STATE $v(\alpha)\gets \left[{\bf 1}\!^T \,\big(\widetilde{C}+ \alpha{\bf 1}\!^T+ {\bf 1} \alpha^T\big)^{-1} \, {\bf1}\right]^{-1}$
\STATE $C_\alpha \gets \widetilde{C}+ \alpha{\bf 1}\!^T+ {\bf 1} \alpha^T-v(\alpha){\bf 1\!1}\!^T$
\STATE $\alpha^*\gets \arg \min_{\alpha} \|C_{\alpha}\|_F^2$
\STATE $C\gets C_{\alpha^*}$
\IF {$\widetilde C$ is rank deficient}
\STATE $C \gets C-{\sf diag}(f)$
\ENDIF
\STATE {\bf return} $C$ 
\STATE{\quad} 
\end{algorithmic}
\end{algorithm}

\subsection{Numerical experiments}\label{sec:num}

In this section we conduct experiments on two synthetic datasets that we generate using random covariance method (RCM) and cascade method. We start by describing both methods. 

\subsubsection{Data generation}
\label{data_gen}
\paragraph{Random covariance method (RCM)} We generate a synthetic dataset with $M = 2000$ genes and $N = 50$ measurements (samples) using RCM. For that we first generate an auxiliary matrix $H \in \mathbb{R}^{M\times m}$ ($m=2$) whose entries are independent random variables, uniformly distributed over the interval $I=[-10,10]$. Next, we sample a diagonal matrix $D \in \mathbb{R}^{M\times M}$ with diagonal entries being i.i.d. exponential random variables with parameter $\lambda_D=30$. We let $\Sigma=HH^T+D$ be the  population (parameter) covariance matrix. Then we generate the true empirical logarithmic data $Y^{(i)}=\big(Y_n^{(i)}\big) \sim \mathcal{N}\big(0,\Sigma\big)$ for each $i=1,\hdots,N$. Finally, we set the observed logarithmic data be $\widetilde{Y}_n^{(i)}=Y_n^{(i)}+V^{(i)}$, where vector $V^{(i)}$ are $\mathcal{N}\big(-0.01,100\big)$ random variables. 

\paragraph{Cascade method} The {\it cascade} datasets were generated according by a directed acyclic weighted network $G=(V,E)$  aka directed acyclic graph (DAG). The graph was randomly generated via a recurrent {\it cascade} model. The parent-offspring relation is represented by the direction of edges $E=\{(u,v)\}$ of the graph $G$, i.e., $u$ is the parent vertex and $v$ is its offspring. For any vertex $v$ let $pa(v)$ be the set of its parents, $pa(v) = \{ u\in V: (u,v)\in E\}$.
Next, for each edge $(u,v) \in E$ an independent random weight $c_{uv}$ is assigned, with c.d.f. 
$$p\,U_{[a_-,b_-]}(x)+(1-p)\, U_{[a_+,b_+]}(x),$$
where the parameters $\,a_- < b_- \leq 0$, $\,0 \leq a_+ < b_+$, and $p \in (0,1)$ are fixed,
and $U_A(x)$ denotes the uniform c.d.f. on an interval $A$. We generated a random weighted DAG with the nodes $v \in V$ representing the genes. The random variables  $\{Y_v\}_{v\in V}$ representing the logarithmic gene expressions are generated as a noisy multiplicative cascade via the following structural linear recursive equations:
$$
Y_v = \sum_{u\in pa(v)} c_{uv} Y_{u} + \varepsilon_v, 
$$
where the recursion begins with $Y_0=y_0$, and proceeds from generation to generation. The noise
variables $(\varepsilon_v, v\in V)$ are i.i.d. $\mathcal{N}\big(0, \sigma^2\big)$, sampled independently from the random weights $c_{uv}$.
For simulation of $(Y_v, v\in V)$ the following values of parameters were chosen:
\begin{center}
\begin{tabular}{c|c|c|c|c|c|c|c|c}
 $p$ & $[a_-,b_-]$ & $[a_+,b_+]$ & $\sigma^2$ & $y_0$ & $|V|$\\
\hline
 $1/3$ & $[-1.2,-0.5]$ & $[0.5,1.3]$ & $1$ & $4.5$ & $2000$
\end{tabular}
\end{center}

\subsubsection{Simulation results}

We generate two datasets (RCM and Cascade) using the methods described in section \ref{data_gen}. Each date set consists of a matrix with the empirical data $\big(Y_n^{(i)}\big) \in \mathbb{R}^{M\times N}$ and a matrix with the observed data $\big(\widetilde{Y}_n^{(i)}\big)\in \mathbb{R}^{M\times N}$. In both, RCM and Cascade datasets, we let $M = 2000$ genes and $N=50$ measurements (samples). 
For each dataset, we normalize the covariance matrix $\widetilde{C}$, obtained from the observed data, by using C-SHIFT, Rank, Quantile, LOESS, and MAD methods. We compare the performance of the algorithms using the results presented in Figures \ref{RCM_corr}-\ref{Casc_heat}. 

In Figures \ref{RCM_corr} and \ref{Casc_corr} we depict the bar graphs of correlations for RCM and Cascade datasets, respectively. 
As we can see in both datasets, the correlations of the observed data (yellow) are shifted away from $-1$ so that there are no large magnitude negative correlations. The aim of the normalization algorithms is to shift the correlations back into correct positions, i.e., ideally, the correlations of the normalized data should match the empirical correlations. Note that for both datasets, the C-SHIFT method correctly recovers the number of correlations in each interval: the red bars almost perfectly match the black bars. In contrast, other normalization methods could not recover the correct numbers of correlations, especially for the correlations of larger magnitudes. Specifically, Rank, Quantile and LOESS normalization techniques tend to shift correlations mostly to the center of the bar plot, each forming a bell shape. Predictably, the MAD method has the worst performance in correlation recovery. Finally, among the other three normalization techniques (Quantile, Rank, and LOESS), the latter method has the poorest  performance.  

Figures \ref{RCM_heat} and \ref{Casc_heat} contain six heat maps each, for RCM and Cascade datasets, respectively. Each heat map illustrates the transformation of the true empirical correlations $corr(Y_n,Y_m)$ (horizontal axis) after adding bias and applying the corresponding normalization method. We consider 2,001,000 correlations corresponding to all pairs of genes. For each point, representing a pair of genes $(n,m)$, the horizontal coordinate equals the true empirical correlation $corr(Y_n,Y_m)$ in all six plots. The vertical coordinate in the top left heat map is the correlation in the observed data, $corr(\widetilde{Y}_n,\widetilde{Y}_m)$. Importantly, it shows the shift of correlations rightward in the observed data. In the remaining five heat maps, the vertical coordinates represent the correlations after normalization. Going clockwise, these five heat maps are Rank, Quantile, MAD, LOESS, and C-SHIFT. The darker the color, the higher the density. Notice that the heat map for C-SHIFT is almost perfectly diagonal, which demonstrates how well C-SHIFT recovers the correlations. Thus, in addition to correctly recovering the right numbers of correlations in each interval (which was demonstrated in Figures \ref{RCM_corr} and \ref{Casc_corr}), the proposed C-SHIFT algorithm also returns (shifts back) the correlations to the correct margins. Hence, the heat map is a diagonal line. The number on top of each heat map indicates the relative leftover error after normalization, i.e., the $\ell^2$-norm of the vector of differences between the horizontal and vertical coordinates, scaled by the Frobenius norm of the difference between the empirical and the observed correlation matrices. Thus, the left top heat map is assigned the value $1$, and for each normalization method, the smaller the number the better it recovers the empirical correlation matrix. Any such number smaller than one is an improvement. The number for C-SHIFT is by far the smallest in each dataset ($0.023518$ and $0.023881$), while in the case of MAD normalization, the corresponding number even exceeds $1$.

\section{Evaluation of C-SHIFT algorithm on real data}\label{sec:real}

In this section we apply C-SHIFT algorithm to real biological data, and compare the resulting correlations to 
the correlations obtained by normalizing the same data with Rank, Quantile, and LOESS.
In the analysis, we used scaled $\ell^1$-norm to measure the distance between correlation matrices.
Specifically, for two correlation matrices, $R=\big(r_{i,j}\big)$ and $R'=\big(r'_{i,j}\big)$, the norm
\begin{equation}\label{eqn:CorrNorm}
d(R,R')={1 \over M(M-1)}\sum\limits_{1\leq i<j\leq M} \!\!\!\!\!|r_{i,j}-r'_{i,j}|
\end{equation}
measures the distance between $R$ and $R'$ on the scale from $0$ to $1$.
We considered the following microarray datasets from GEO depository.

\begin{table*}[htbp]\caption{Distance \eqref{eqn:CorrNorm} between pairs of correlation matrices recovered by normalization methods
in the six datasets.  
 }
\centering 
\begin{tabular}{|l|c|c|c|c|c|c}
\toprule 
 Dataset & Rank vs. C-SHIFT & Quantile vs. C-SHIFT & LOESS vs. C-SHIFT \\
\hline
 {\bf GSE7803[Carcinoma]} & $0.02824$ & $0.017624$ & $0.017557$ \\
 {\bf GSE7803[Normal]} & $0.03425$ & $0.023438$ & $0.025578$ \\
 {\bf GSE152738} & $0.019763$ & $0.014113$ & $0.015207$ \\
 {\bf GSE86858} & $0.096963$ & $0.095617$ & $0.10688$ \\
 {\bf GSE59412[TIVE]} & $0.046627$ & $0.041095$ & $0.043859$ \\
 {\bf GSE59412[BJAB]} & $0.041824$ & $0.038841$ & $0.04086$ \\
 \bottomrule
\end{tabular}
\label{tab:dist}
\end{table*}

\medskip
\noindent
Two datasets come from GSE7803 in GEO depository \cite{zhai2007gene}. 
Dataset {\bf GSE7803[Carcinoma]} looks into $21$ samples of invasive squamous cell carcinomas 
and $4,152$ genes. 
Dataset {\bf GSE7803[Normal]} has $10$ normal cervical samples and $4,709$ genes.

\medskip
\noindent
Dataset {\bf GSE152738}  from GEO depository \cite{nishimura2021chitinase} consists of $58$ liver specimens from adult liver donors and $12,164$ genes. 

\medskip
\noindent
Dataset {\bf GSE86858} obtained in \cite{kozuka2017marked} has $15,312$ genes and $8$ samples from obese diabetic mice treated with $\gamma$-oryzanol-encapsulated nanoparticles, of which, $4$ were taken from liver and $4$ from hypothalamus.

\medskip
\noindent
Two datasets come from GSE59412 \cite{yang2014systems}, where it was discovered that the ectopic expression of miR-K12-11 differentially affected gene expression in BJAB cells of lymphoid origin and TIVE cells of endothelial origin. 
Dataset {\bf GSE59412[TIVE]} consists of $8$ samples of TIVE cells and $16,700$ genes.
Dataset {\bf GSE59412[BJAB]} consists of $24$ samples of BJAB cells and $19,296$ genes.

\medskip
\noindent
All six datasets considered 
were not normalized prior to the analysis. 
Affy R package and MAS-5 method \cite{arteaga2008overview,gautier2004affy,irizarry2003r,pepper2007utility} was used 
for reading and preliminary data analysis at the probe-level of affymetrix CEL files. 
We calculated Abesnt/Present Call for each probe set and subselected only the genes that are expressed in all samples.

\medskip
\noindent
Next, we summarize our observations.
First, we notice that in all real and synthetic datasets considered in this analysis, the correlations produced by Rank, Quantile, and LOESS 
are close to each other. In the synthetic data, where the desired true empirical correlations are known, 
one easily encounters a situation where under a strong bias the correlations produced by C-SHIFT are significantly
different from the correlations produced by Rank, Quantile, or LOESS. See Fig.~\ref{Casc_heat}.

Recall that $d(R,R')$ defined in \eqref{eqn:CorrNorm} measures the distance between correlation matrices on the scale from $0$ to $1$.
In the six real datasets considered in this work, we notice that the distances between the correlations obtained from C-SHIFT 
and either one of the three normalization methods used in comparison (Rank, Quantile, and LOESS) range between $0.01$ and $0.1$.
See Table \ref{tab:dist} and Figures \ref{GSE7803_tumor}, \ref{GSE7803_normal}, \ref{GSE152738}, \ref{GSE86858}, \ref{GSE59412_TIVE}, and \ref{GSE59412_BJAB}.
In five out of six datasets, the distance between C-SHIFT and any of the three normalization approaches does not exceed $0.05$.  
A small but sizable mismatch of $\approx 0.1$ between C-SHIFT and each of the three normalization methods is observed in dataset GSE86858.

\begin{figure*}
\begin{center}
\includegraphics[width=0.9\linewidth]{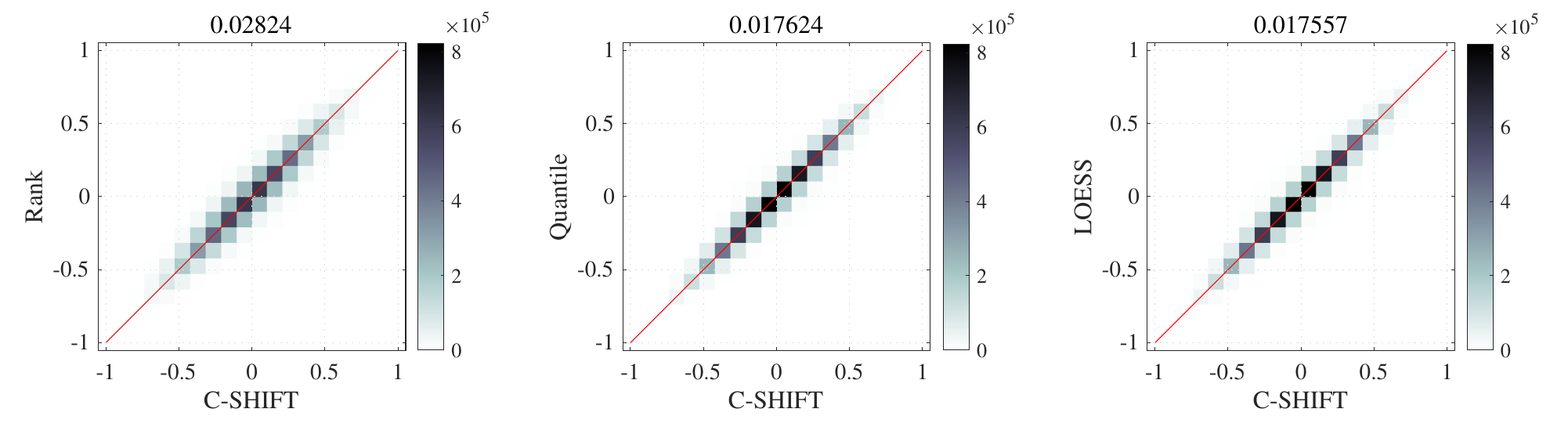}
\includegraphics[width=0.9\linewidth]{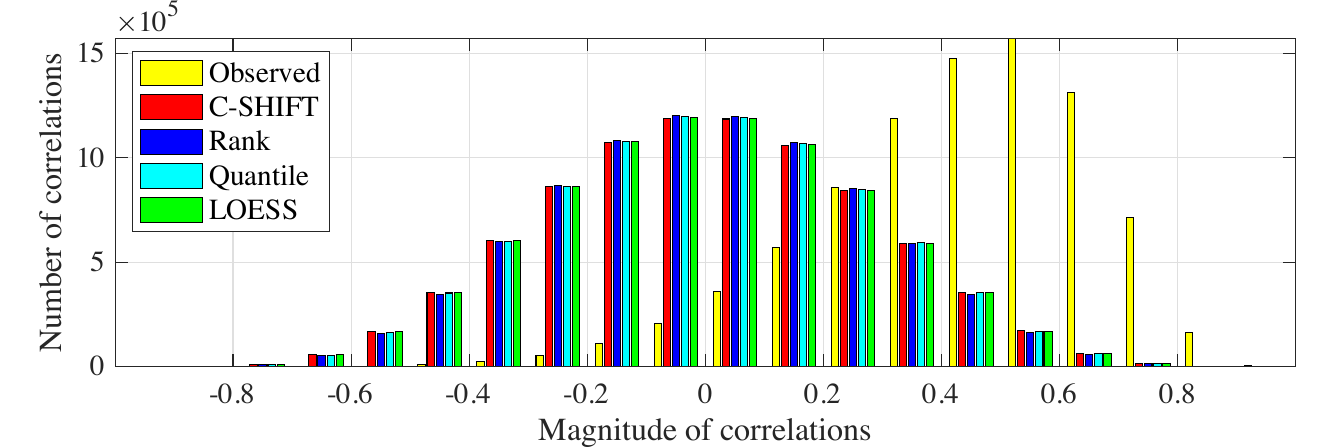}
\caption{Top row: heat maps for {\bf GSE7803[Carcinoma]} dataset that compare the correlations obtained by Rank, Quantile, and LOESS to C-SHIFT.
The number on top of each heat map represents the distance between correlation matrices as defined in \eqref{eqn:CorrNorm}.
Bottom row: numbers of correlations for {\bf GSE7803[Carcinoma]} dataset. Different colors correspond to different correlation matrices, indicated in the legend.
Horizontal axis is partitioned into intervals of length $0.1$.}
\label{GSE7803_tumor}
\end{center}
\end{figure*}
\begin{figure*}
\begin{center}
\includegraphics[width=0.9\linewidth]{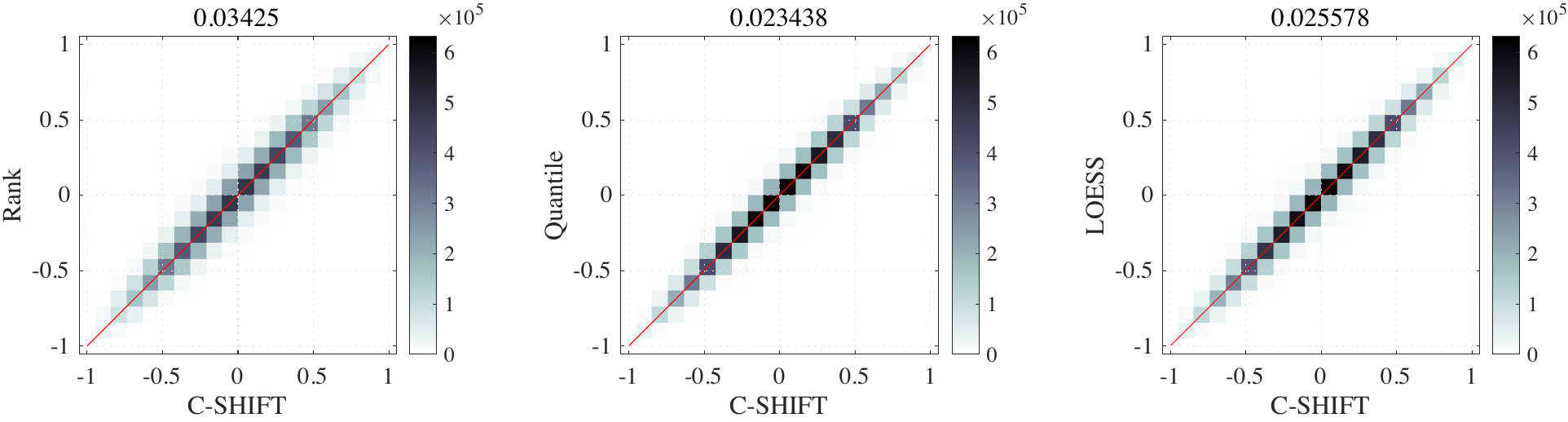}
\includegraphics[width=0.9\linewidth]{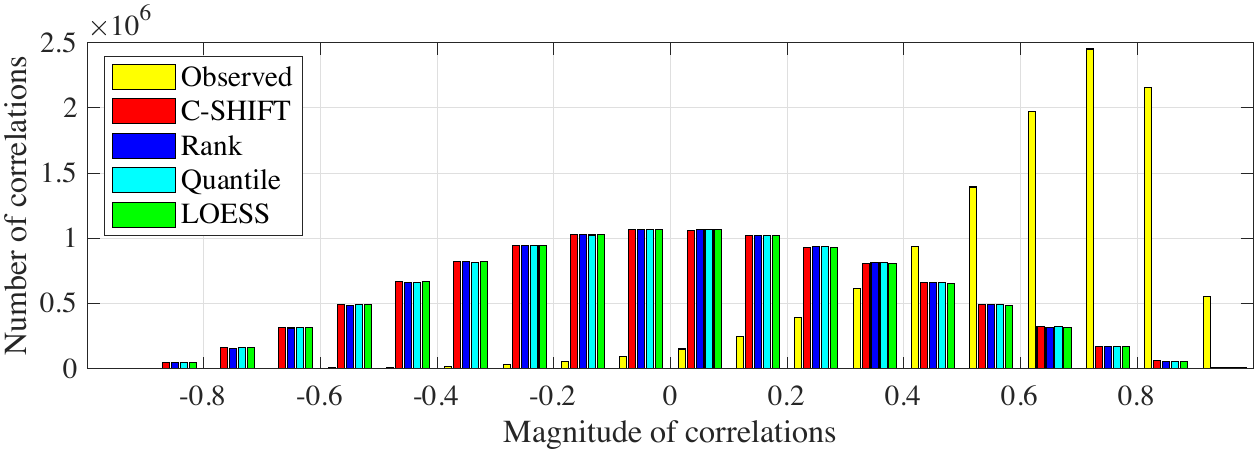}
\caption{Top row: heat maps for {\bf GSE7803[Normal]} dataset that compare the correlations obtained by Rank, Quantile, and LOESS to C-SHIFT.
The number on top of each heat map represents the distance between correlation matrices as defined in \eqref{eqn:CorrNorm}.
Bottom row: numbers of correlations for {\bf GSE7803[Normal]} dataset. Different colors correspond to different correlation matrices, indicated in the legend.
Horizontal axis is partitioned into intervals of length $0.1$.}
\label{GSE7803_normal}
\end{center}
\end{figure*}

\begin{figure*}
\begin{center}
\includegraphics[width=0.9\linewidth]{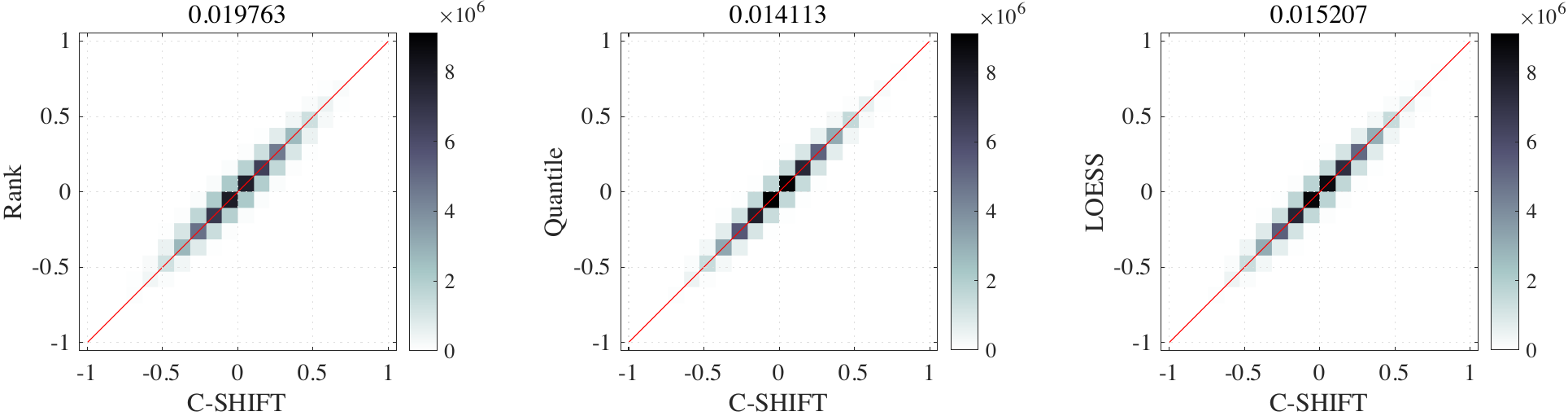}
\includegraphics[width=0.9\linewidth]{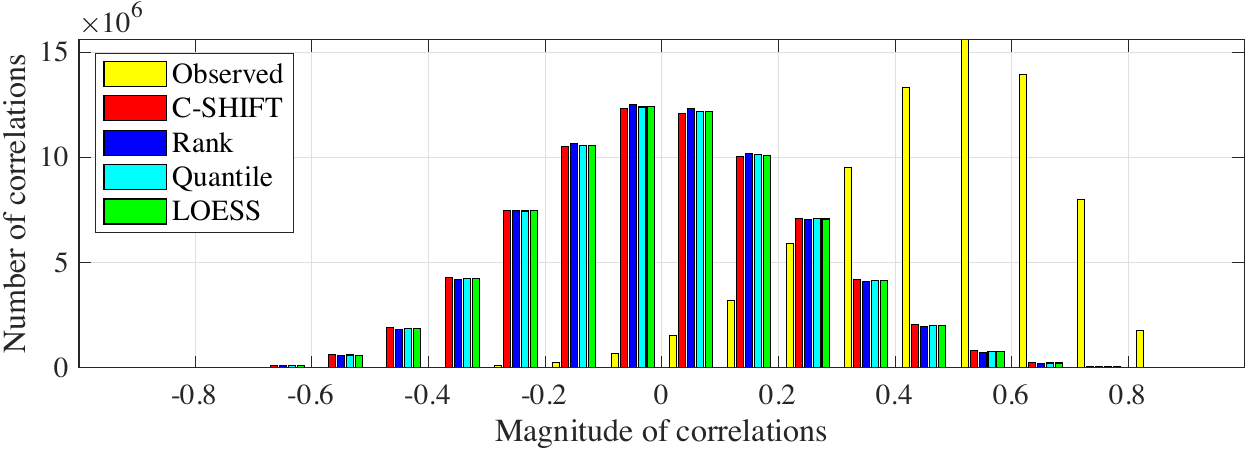}
\caption{Top row: heat maps for {\bf GSE152738} dataset that compare the correlations obtained by Rank, Quantile, and LOESS to C-SHIFT.
The number on top of each heat map represents the distance between correlation matrices as defined in \eqref{eqn:CorrNorm}.
Bottom row: numbers of correlations for {\bf GSE152738} dataset. Different colors correspond to different correlation matrices, indicated in the legend.
Horizontal axis is partitioned into intervals of length $0.1$.}
\label{GSE152738}
\end{center}
\end{figure*}
\begin{figure*}
\begin{center}
\includegraphics[width=0.9\linewidth]{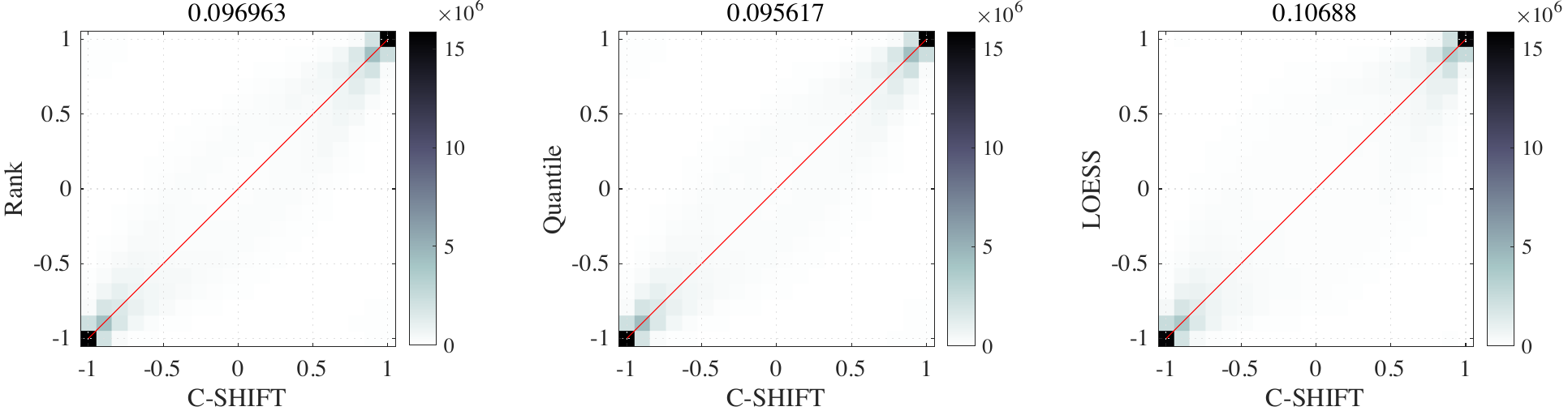}
\includegraphics[width=0.9\linewidth]{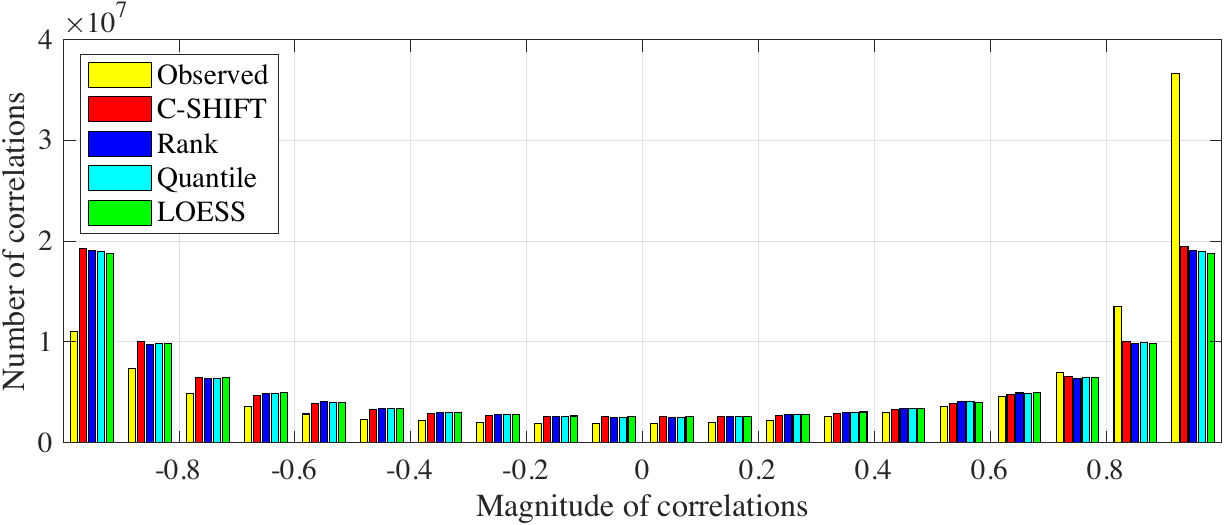}
\caption{Top row: heat maps for {\bf GSE86858} dataset that compare the correlations obtained by Rank, Quantile, and LOESS to C-SHIFT.
The number on top of each heat map represents the distance between correlation matrices as defined in \eqref{eqn:CorrNorm}.
Bottom row: numbers of correlations for {\bf GSE86858} dataset. Different colors correspond to different correlation matrices, indicated in the legend.
Horizontal axis is partitioned into intervals of length $0.1$.}
\label{GSE86858}
\end{center}
\end{figure*}

\begin{figure*}
\begin{center}
\includegraphics[width=0.9\linewidth]{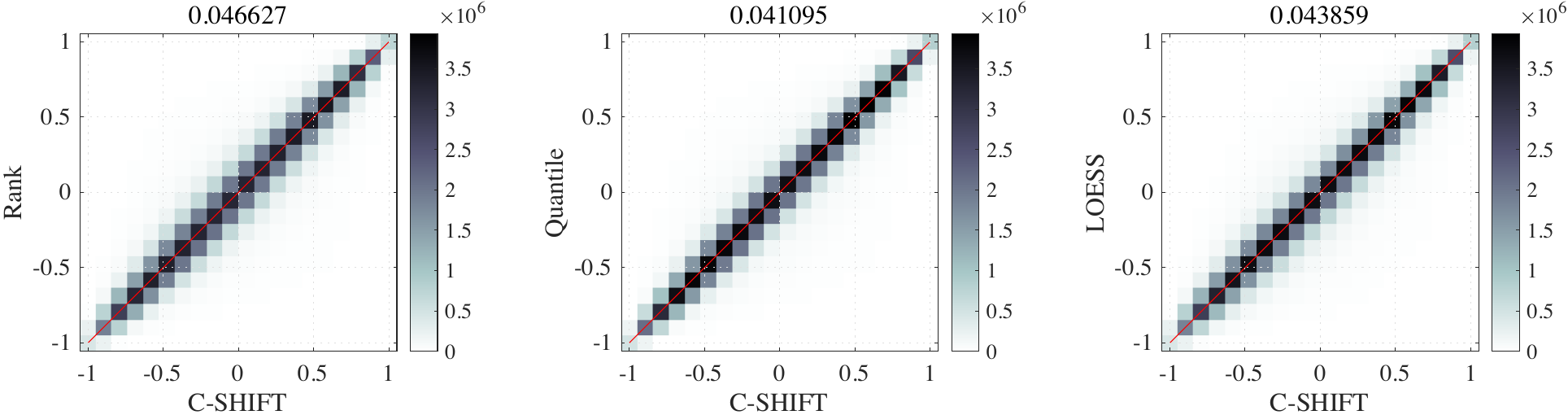}
\includegraphics[width=0.9\linewidth]{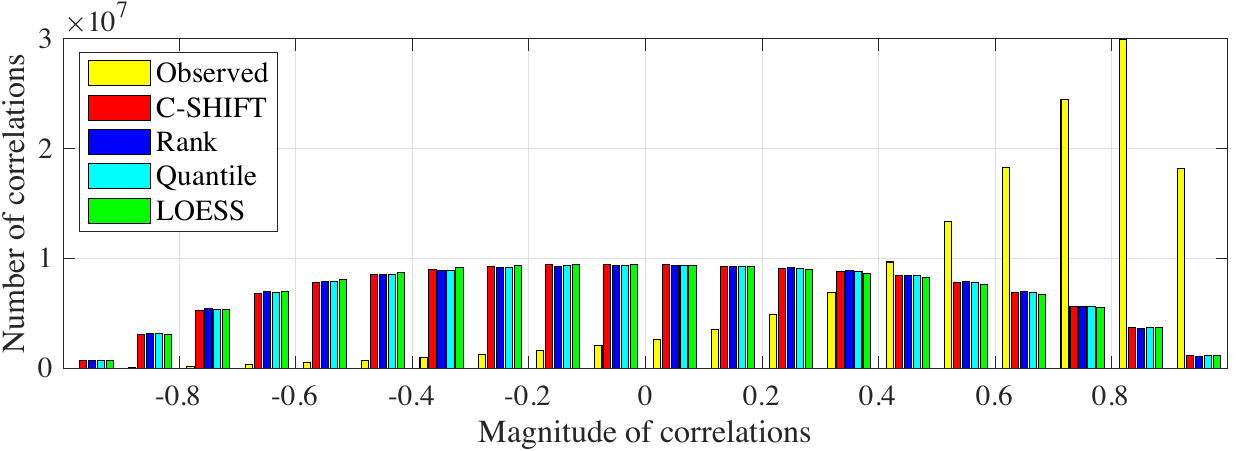}
\caption{Top row: heat maps for {\bf GSE59412[TIVE]} dataset that compare the correlations obtained by Rank, Quantile, and LOESS to C-SHIFT.
The number on top of each heat map represents the distance between correlation matrices as defined in \eqref{eqn:CorrNorm}.
Bottom row: numbers of correlations for {\bf GSE59412[TIVE]} dataset. Different colors correspond to different correlation matrices, indicated in the legend.
Horizontal axis is partitioned into intervals of length $0.1$.}
\label{GSE59412_TIVE}
\end{center}
\end{figure*}
\begin{figure*}
\begin{center}
\includegraphics[width=0.9\linewidth]{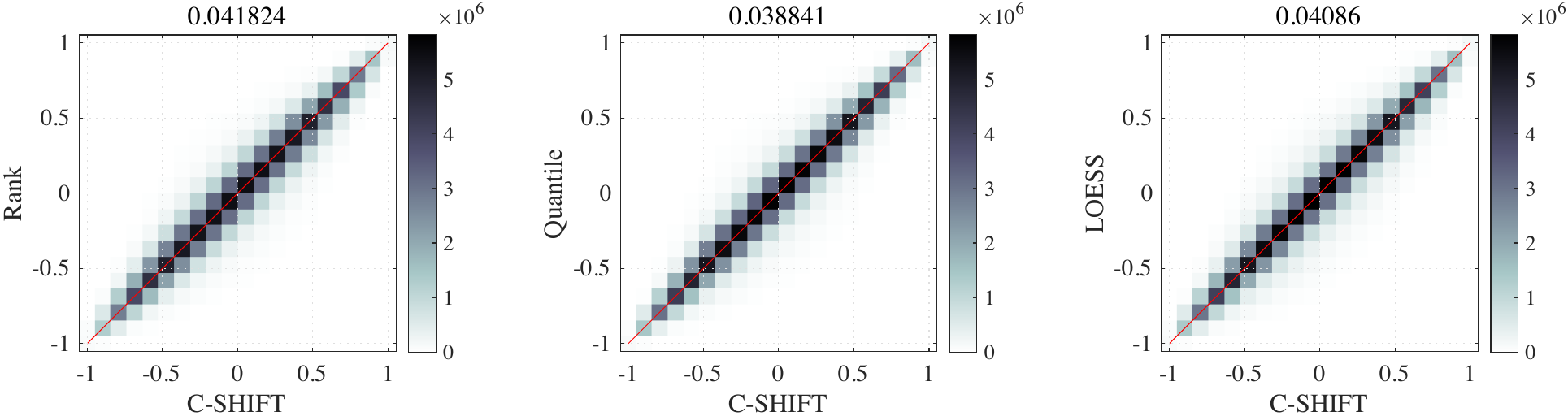}
\includegraphics[width=0.9\linewidth]{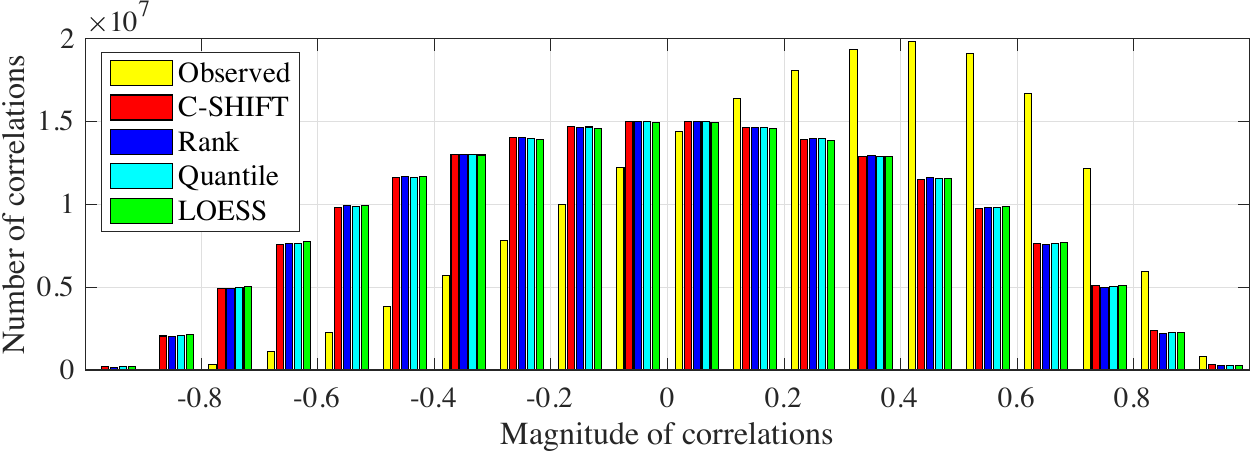}
\caption{Top row: heat maps for {\bf GSE59412[BJAB]} dataset that compare the correlations obtained by Rank, Quantile, and LOESS to C-SHIFT.
The number on top of each heat map represents the distance between correlation matrices as defined in \eqref{eqn:CorrNorm}.
Bottom row: numbers of correlations for {\bf GSE59412[BJAB]} dataset. Different colors correspond to different correlation matrices, indicated in the legend.
Horizontal axis is partitioned into intervals of length $0.1$.}
\label{GSE59412_BJAB}
\end{center}
\end{figure*}

\section{Discussion}\label{sec:discuss}
In systems biology, the gene co-expression networks (GCN) are reconstructed from the correlations between the genes. GCN recovery relies on removing the bias with a normalization method, and thus improving the estimation of correlations between the pairs of genes. However, the standard normalization techniques such as Rank, Quantile, LOESS, and MAD are known to be insufficient at recovering true empirical correlations while the C-SHIFT algorithm is specifically designed to recover the true empirical correlations. The multiple experiments with synthetic datasets demonstrate the algorithm's superior performance at recovering true empirical correlations
in comparison to the standard normalization techniques. 

Also, we observed that the correlations recovered by C-SHIFT, Rank, Quantile, and LOESS would essentially match in five out of six real datasets considered in this paper.
One dataset demonstrated small but sizable difference hinting at a greater variance of the bias.

Importantly, we notice that the C-SHIFT algorithm corrects the positive shift of covariances (and correlations) observed when $\hat{\omega}=\widehat{Var}(V)$ is larger than $\hat{a}_n=-\widehat{Cov}(Y_n,V)$ ($n=1,\hdots,M$) in \eqref{eqn:covLogEmpirical}. Hence, the independence of $V$ from $Y_n$ assumption can be replaced with 
a weaker assumption stating that $Cov(Y_n,V)\ll Var(V)$. This will be explored in a follow-up publication.

An alternative version of the C-SHIFT algorithm is based on trace minimization approach instead of energy minimization.
In this alternative C-SHIFT algorithm, the positive semi-definite matrix $C_{\alpha^*}$ with
$$\alpha^*=\text{argmin} \,\Tr\big(C_\alpha\big)$$
is used to approximate the true empirical covariance matrix $C$.
The analogs of Lemmas \ref{lem:alpha} and \ref{lem:Hessian} and the convexity result in Theorem \ref{thm:Convex} are also established for $\Tr\big(C_\alpha\big)$ in the trace minimization approach.
See \cite{logan2020thesis}.
Empirically it appears that this alternative approach produces the same $\alpha^*$ as the original C-SHIFT algorithm based on energy minimization as presented in this paper, and therefore it recovers the empirical covariance $C$ with the same accuracy.
Thus, the alternative, trace minimizing C-SHIFT algorithm can be used instead of Algorithm \ref{alg:C-SHIFT}.
This approach will be analyzed in a follow-up paper.

Finally, the C-SHIFT algorithm was deposited on GitHub at
\hyperref[https://github.com/evcphd/C-SHIFT]{https://github.com/evcphd/C-SHIFT}

\section{Proofs}\label{sec:proofs}

\begin{proof}[Proof of Proposition~\ref{prop:1C1}]
Observe that
$$x^T \big(\mathcal{M} -v\, {\bf 1\!1}\!^T \big)x= x^T \mathcal{M}x -v\left(\sum x_i\right)^2 \geq 0$$
for all $x \in \mathbb{R}^M$ if and only if $v \leq v^*$, where $v^*$ minimizes $x^T \mathcal{M}x$ under the condition $\sum x_i=Const$.
Next, applying the Lagrange multipliers method, we obtain $2\mathcal{M}x=\lambda {\bf 1}$, and therefore, 
$$v^*=\frac{x^T \mathcal{M}x}{(\sum x_i)^2}=\frac{\frac{\lambda}{2} x^T {\bf 1}}{(\sum x_i)^2}=\frac{\lambda/2}{{\bf 1}^T x}=\frac{1}{{\bf 1}\!^T \,\mathcal{M}^{-1} \, {\bf1}}$$
as $x=\frac{\lambda}{2}\mathcal{M}^{-1}{\bf 1}$.
\end{proof}

\begin{proof}[Proof of Lemma \ref{lem:alpha}]
By \eqref{eqn:Calpha}, we have
\begin{align}\label{eqn:CalphaFrobenius}
\|C_\alpha\|_F^2 =& \|\widetilde{C}\|_F^2+2M\!\sum\limits_{i=1}^M \alpha_i^2+M^2\,v^2(\alpha)+4\left({\bf 1}\!^T \!\widetilde{C}\alpha \right) \nonumber \\
&+ 2a^2  - 2c\,v(\alpha) -4Ma\,v(\alpha)
\end{align}
Notice that 
\begin{align}\label{eqn:daInverse}
&\frac{\partial}{\partial \alpha_i}A_\alpha=\bar{e}_i{\bf 1}^T+{\bf 1}\bar{e}_i^T \quad 
\text{ and } \nonumber \\
&\frac{\partial}{\partial \alpha_i}A_\alpha^{-1}=-A_\alpha^{-1}\,\big(\bar{e}_i{\bf 1}^T+{\bf 1}\bar{e}_i^T \big)A_\alpha^{-1},
\end{align}
where $\bar{e}_i$ is the $i$-th coordinate vector. Therefore, we have
\begin{align}\label{eqn:dava}
\frac{\partial}{\partial \alpha_i}  v(\alpha) &= v^2(\alpha){\bf 1}\!^T A_\alpha^{-1} \, \big(\bar{e}_i{\bf 1}^T+{\bf 1}\bar{e}_i^T \big)A_\alpha^{-1} \, {\bf1} \nonumber \\
&=2v(\alpha)\,{\bf 1}\!^T A_\alpha^{-1} \bar{e}_i
\end{align}
implying
\begin{equation}\label{eqn:va_gradient}
\nabla v(\alpha) =2v(\alpha) A_\alpha^{-1} {\bf 1}.
\end{equation}

\medskip
\noindent
Next, the gradient $\nabla \|C_\alpha\|_F^2$ in \eqref{eqn:gradient} is found via the equations \eqref{eqn:CalphaFrobenius} and \eqref{eqn:va_gradient}.
\end{proof}

\begin{proof}[Proof of Lemma \ref{lem:Hessian}]
By \eqref{eqn:gradient}, we have 
\begin{align}\label{eqn:HessianParts}
\frac{1}{4} H_\alpha &= \frac{1}{4} \nabla \left(\nabla  \|C_\alpha\|_F^2\right)^T \nonumber \\ 
&=M\nabla \alpha^T +\nabla {\bf 1}\!^T \big(a-M\,v(\alpha)\big) \\
&+ \Big(\nabla\big(M^2\,v^2(\alpha)-c\,v(\alpha)-2Ma\,v(\alpha)\big)\Big) {\bf1}\!^TA_\alpha^{-1}  \nonumber \\
&+\big(M^2\,v^2(\alpha)-c\,v(\alpha)-2Ma\,v(\alpha)\big) \nabla {\bf1}\!^T A_\alpha^{-1}, \nonumber 
\end{align}
where $\nabla=\left(\frac{\partial}{\partial \alpha_1},\hdots,\frac{\partial}{\partial \alpha_M} \right)^T$ was used as the column vector of the partial derivative operators. 
The summation parts in \eqref{eqn:HessianParts} are calculated as follows.
First, 
\begin{equation}\label{eqn:HessianPart1}
M\nabla \alpha^T =MI.
\end{equation}
Next, \eqref{eqn:va_gradient} implies
\begin{align}\label{eqn:HessianPart2}
&\nabla\big(M^2\,v^2(\alpha)-c\,v(\alpha)-2Ma\,v(\alpha)\big) \\
&~= 2\big(2M^2\,v(\alpha)-c-2Ma\big)v(\alpha)A_\alpha^{-1} {\bf1}  \nonumber
-2Mv(\alpha){\bf1}.
\end{align}
Equation \eqref{eqn:daInverse} yields
\begin{align}\label{eqn:HessianPart3}
\nabla {\bf1}\!^T A_\alpha^{-1} &= \sum\limits_{i=1}^M \bar{e}_i {\bf 1}\!^T \frac{\partial}{\partial \alpha_i}A_\alpha^{-1} \nonumber \\
&=- \sum\limits_{i=1}^M \bar{e}_i {\bf 1}\!^T A_\alpha^{-1}\,\big(\bar{e}_i{\bf 1}^T+{\bf 1}\bar{e}_i^T \big)A_\alpha^{-1} \nonumber \\
&=- \sum\limits_{i=1}^M \big(\bar{e}_i^T A_\alpha^{-1} {\bf 1}\big)\bar{e}_i  {\bf 1}^T A_\alpha^{-1} \nonumber \\
&~\quad-\big({\bf 1}\!^T A_\alpha^{-1}{\bf 1}\big)\sum\limits_{i=1}^M \bar{e}_i \bar{e}_i^T A_\alpha^{-1} \nonumber \\
&=-A_\alpha^{-1} {\bf 1}{\bf1}\!^T A_\alpha^{-1} -\big({\bf1}\!^T A_\alpha^{-1} {\bf 1}\big) A_\alpha^{-1} \nonumber \\
&=-A_\alpha^{-1} {\bf 1}{\bf1}\!^T A_\alpha^{-1} - \frac{1}{v(\alpha)} A_\alpha^{-1}.
\end{align}
Finally, \eqref{eqn:va_gradient} is used to derive
\begin{align}\label{eqn:HessianPart4}
\nabla {\bf 1}\!^T \big(a-M\,v(\alpha)\big) &={\bf 1}{\bf1}\!^T - 2M\, v(\alpha) A_\alpha^{-1}{\bf 1} {\bf1}\!^T.
\end{align}
Combining together equations \eqref{eqn:HessianPart1}-\eqref{eqn:HessianPart4} and substituting them into \eqref{eqn:HessianParts} we obtain \eqref{eqn:hessian}. 
\end{proof}

\begin{proof}[Proof of Theorem \ref{thm:Convex}]
We will use the notations from Lemmas \ref{lem:alpha} and \ref{lem:Hessian}
such as $c:={\bf 1}\!^T \!\widetilde{C} {\bf1}$ and $a:=\!\!\sum\limits_{i=1}^M \alpha_i$.
Without loss of generality we consider $\alpha$ on the hyperplane $a=0$.

\medskip
\noindent
Here, $A_\alpha =\widetilde{C}+ \alpha{\bf 1}\!^T+ {\bf 1} \alpha^T$ is a positive definite symmetric matrix with eigenvalues 
$\,\lambda_1 \geq \hdots \geq \lambda_M>0$
counted with respect to algebraic multiplicity, and let $\{v_i\}_{i=1,\hdots,M}$ be the corresponding orthonormal basis of eigenvectors.

\medskip
\noindent
As $\, \triangle  \|C_\alpha\|_F^2 =  \Tr\big(H_\alpha\big)$, equation \eqref{eqn:hessian} implies
\begin{align}\label{eqn:qtLaplacian}
\frac{1}{4} \triangle  \|C_\alpha\|_F^2 
&= M^2\Big(\!1-v(\alpha)\Tr\big(A_\alpha^{-1}\big)\Big)\! \nonumber \\
&\quad + \!c\Big(\!\Tr\big(A_\alpha^{-1}\big)- v(\alpha) {\bf 1}\!^T\! A_\alpha^{-2}{\bf1}\Big)\! \nonumber \\
&\quad +\!3M\Big(\!Mv^2(\alpha){\bf 1}\!^T \!A_\alpha^{-2}{\bf1}-1\Big).
\end{align}
The Laplacian in \eqref{eqn:qtLaplacian} is shown to be strictly positive in the following three steps.
First, by the Cauchy-Bunyakovsky-Schwarz 
inequality, we have
\begin{align}\label{eqn:First}
&Mv^2(\alpha){\bf 1}\!^T \!A_\alpha^{-2}{\bf1}-1 \nonumber \\
&\quad= v^2(\alpha)\Big(\big\|{\bf 1}\big\|_2^2 \big\|A_\alpha^{-1}{\bf1}\big\|_2^2-\big({\bf 1}\!^T \!A_\alpha^{-1}{\bf1}\big)^2\Big) \geq 0.
\end{align}

\noindent
Next, observe that $\,Mx+(1-x)^2 \geq 1\,$ for $M \geq 2$, 
and all $x \in [0,1]$.
Thus, for a given probability mass function $\{p_k\}_{k=1,\hdots,M}$ such that $p_k<1$ for all $k$,
and a given index $i \in \{1,\hdots,M\}$, Jensen's inequality implies
\begin{align}\label{eqn:Second01}
Mp_i&+\left(\sum\limits_{j:j\not=i} \lambda_j^{-1}p_j \right)\left(\sum\limits_{j:j\not=i} \lambda_j p_j \right)\nonumber \\
&=Mp_i+(1-p_i)^2\left(\sum\limits_{j:j\not=i} \lambda_j^{-1}q_j \right)\left(\sum\limits_{j:j\not=i} \lambda_j q_j\right)\nonumber \\
&\geq Mp_i+(1-p_i)^2 ~\geq 1
\end{align}
where we let $q_j={p_j \over 1-p_i}$ for all $j\not=i$.
Summing over all $i$ in \eqref{eqn:Second01}, we obtain,
\begin{align}\label{eqn:Second02}
\sum\limits_i \lambda_i^{-1}p_i+&{1 \over M}\sum\limits_i \lambda_i^{-1} \left(\sum\limits_{j:j\not=i} \lambda_j^{-1}p_j \right)\left(\sum\limits_{j:j\not=i} \lambda_j p_j \right)\nonumber \\
&\geq {1 \over M}\sum\limits_i \lambda_i^{-1}.
\end{align}
Eqn. \eqref{eqn:Second02} implies
\begin{align}\label{eqn:Second03}
\sum\limits_i \lambda_i^{-1}p_i+& {1 \over M}\sum\limits_i \lambda_i^{-1} p_i \left(\sum\limits_{j:j\not=i} \lambda_j^{-1} \right)\left(\sum\limits_k \lambda_k p_k \right)\nonumber \\
&\geq {1 \over M}\sum\limits_i \lambda_i^{-1}.
\end{align}
which rewrites as
\begin{align}\label{eqn:Second04}
\sum\limits_i &\lambda_i^{-1}p_i+ {1 \over M}\left(\sum\limits_i \lambda_i^{-1} p_i \right)\left(\sum\limits_j \lambda_j^{-1} \right)\left(\sum\limits_k \lambda_k p_k \right)\nonumber \\
&\geq {1 \over M}\left(\sum\limits_i \lambda_i^{-2} p_i\right) \left(\sum\limits_k \lambda_k p_k \right)+ {1 \over M}\sum\limits_i \lambda_i^{-1}.
\end{align}

\medskip
\noindent
Finally, we let $p_i={1 \over M}\big({\bf 1}^Tv_i\big)^2$ and substitute the following expressions into \eqref{eqn:Second04}:  
$$\sum\limits_i \lambda_i p_i ={1 \over M}{\bf 1}\!^T \!A_\alpha \!{\bf1}={1 \over M}{\bf 1}\!^T \!\widetilde{C} {\bf1}={c \over M} ~~\text{ as }~a=0,$$
$$\sum\limits_i \lambda_i^{-1} p_i ={1 \over M}{\bf 1}\!^T \!A_\alpha^{-1}\!{\bf1}={1 \over M\,v(\alpha)},$$
$$\sum\limits_i \lambda_i^{-1}=\Tr\big(A_\alpha^{-1}\big),~~\text{and }~\sum\limits_i \lambda_i^{-2} p_i ={1 \over M}{\bf 1}\!^T \!A_\alpha^{-2} {\bf1}.$$
Consequently, \eqref{eqn:Second04} rewrites as
\begin{align}\label{eqn:Second}
M^2&\Big(\!1-v(\alpha)\Tr\big(A_\alpha^{-1}\big)\Big)\! \nonumber \\
&\quad + \!c\Big(\!\Tr\big(A_\alpha^{-1}\big)- v(\alpha) {\bf 1}\!^T \!A_\alpha^{-2}{\bf1}\Big)~\geq 0.
\end{align}

Substituting \eqref{eqn:First} and \eqref{eqn:Second} into \eqref{eqn:qtLaplacian}, we then obtain $\, \triangle  \|C_\alpha\|_F^2 \geq 0$.
\end{proof}

\section*{Acknowledgments}
We would like to thank the anonymous referees for encouraging remarks, valuable feedback, and suggesting ways to improve the paper.
This research was supported by the FAPESP awards 2018/14952-7 and  2018/07826-5, and by the NSF award DMS-1412557.

\bibliographystyle{plain}

\end{document}